\newif\ifSubmission
\Submissiontrue

\newif\ifComments
\Commentsfalse

\newif\ifAnonymous
\Anonymousfalse

\documentclass[12pt]{article}

\usepackage[letterpaper,margin=1.in]{geometry}
\usepackage[utf8]{inputenc}

\ifAnonymous
    \author{}
\else
    \author{Dakshita Khurana\thanks{NTT Research and University of Illinois Urbana-Champaign. Email: \texttt{dakshita@illinois.edu}} \quad Bhaskar Roberts\thanks{UC Berkeley. Work done in part at NTT Research. Email: \texttt{bhaskarroberts@gmail.com}} \quad Avishay Tal\thanks{UC Berkeley. Email: \texttt{atal@berkeley.edu}
    }}
\fi
\date{}

\usepackage{libertine}

\usepackage{amsthm}
\usepackage[T1]{fontenc}

\usepackage{enumitem}
\usepackage{graphicx} 
\usepackage{physics}
\usepackage{breakcites}
\usepackage{amsmath,bm}
\usepackage{amssymb}
\usepackage[libertine]{newtxmath}
\usepackage[colorlinks=true, allcolors=blue]{hyperref}
\usepackage[dvipsnames]{xcolor}
\usepackage{bbm}
\usepackage{cleveref}
\usepackage{mathtools}
\usepackage{algorithm, algpseudocode, float, algorithmicx}
\usepackage{mdframed}
\usepackage{caption}

\usepackage{multirow}

\usepackage[english]{babel}

\allowdisplaybreaks

\theoremstyle{definition}
\newtheorem{definition}{Definition}[section]
\newtheorem{theorem}{Theorem}[section]
\newtheorem{lemma}{Lemma}[section]

\newtheorem{claim}{Claim}[section]
\theoremstyle{definition}

\newcommand{\Prove}{\mathsf{Prove}}
\newcommand{\Verify}{\mathsf{Verify}}
\newcommand{\poly}{\mathsf{poly}}
\newcommand{\RO}{H}
\newcommand{\ro}{h}
\newcommand{\minent}{{h_\infty}}
\newcommand{\Minent}{{H_\infty}}

\newcommand{\secp}{\lambda}
\newcommand{\negl}{\mathsf{negl}}
\newcommand{\getsr}{\overset{\$}{\leftarrow}}
\newcommand{\bit}{\{0,1\}}

\newcommand{\bbC}{\mathbb{C}}

\newcommand{\bbF}{\mathbb{F}}

\newcommand{\bbN}{\mathbb{N}}
\newcommand{\bbR}{\mathbb{R}}

\newcommand{\bfX}{\mathbf{X}}
\newcommand{\bfx}{\mathbf{x}}

\newcommand{\bfz}{\mathbf{z}}

\newcommand{\cA}{\mathcal{A}}

\newcommand{\cH}{\mathcal{H}}
\newcommand{\cO}{\mathcal{O}}

\ifComments
    \newcommand{\authnote}[3]{\textcolor{#3}{[{\footnotesize {\bf #1:} { {#2}}}]}}
\else
    \newcommand{\authnote}[3]{}
\fi

\newcommand{\dakshita}[1]{\authnote{Dakshita}{#1}{Red}}

\title{Certified Randomness without Structure\\ Against Shallow-Query Adversaries}

\begin{document}

\maketitle
\begin{abstract}
In a recent breakthrough, Yamakawa and Zhandry (J. ACM 2024) constructed a proof of quantumness in the quantum random oracle model (QROM) in which the quantum prover samples a codeword preimage of a publicly computable function $H$. They conjectured that given any $H$, a successful prover must sample their preimage from a high-entropy distribution over possible answers. If true, this would give a certifiable randomness protocol in the quantum random oracle model. As partial evidence for their conjecture, Yamakawa and Zhandry proved the security of their certifiable randomness protocol assuming the Aaronson-Ambainis conjecture. 

We prove the security of the certifiable randomness protocol of Yamakawa-Zhandry unconditionally, without relying on the unproven Aaronson-Ambainis conjecture, against low query-depth quantum adversaries: specifically, adversaries that make up to $o(\log \lambda)$ adaptive quantum queries to the random oracle.
\end{abstract}
\thispagestyle{empty}
\newpage
\tableofcontents
\thispagestyle{empty}

\newpage
\setcounter{page}{1}

\section{Introduction}
Randomness is a fundamental resource in cryptography and computing, yet
generating \emph{certifiably random} outputs remains a significant challenge. Classical computers
cannot produce such randomness on their own, since any deterministic process is
in principle predictable. Quantum mechanics offers a potential solution: the
inherent probabilism of quantum measurement suggests that quantum devices could
generate outputs that are information-theoretically random. But how can a
classical verifier trust that a quantum prover is genuinely exploiting quantum
randomness, rather than outputting a biased or pre-determined answer?

\paragraph{The landscape of certified randomness.}
The \emph{device-independent} approach, pioneered by Colbeck~\cite{Col06}, Pironio et al.~\cite{PAM+10}, and Vazirani-Vidick~\cite{VV12}, achieves certified randomness by exploiting Bell inequality violations: if a pair of non-communicating quantum devices produces correlations that violate a Bell inequality, the outputs must contain genuine randomness. This approach offers information-theoretic security, but requires stringent experimental conditions, namely two spatially separated non-communicating devices sharing high-fidelity entangled states, that make deployment challenging.
 
An alternative line of work achieves certified randomness using \emph{computational assumptions}. Brakerski et al.~\cite{BCM+21} and the follow-up by Mahadev, Vazirani, and Vidick~\cite{MVV22}, showed how to certify randomness from a single quantum device under the hardness of the Learning With Errors (LWE) problem.
Subsequently, Aaronson and Hung~\cite{AH23} showed that quantum advantage experiments can be repurposed into certified randomness protocols under strong computational assumptions.
Bassirian et al.~\cite{BBFGT26} produced certifiable randomness in the QROM model, but with exponential-time verification. They also gave supporting evidence for the hardness assumptions of Aaronson and Hung~\cite{AH23}.

Both paradigms have limitations: device-independent protocols place demanding spatial restrictions on the setup, while computational protocols inherit the strength (or weakness) of their underlying assumptions. In addition, the \cite{BCM+21} protocol is privately verifiable and requires rounds of back-and-forth interaction, whereas the~\cite{AH23} protocol requires the verifier to run in superpolynomial time. 
The \emph{quantum random oracle model} (QROM)~\cite{BDF+11} offers a third setting that sidesteps both issues: there is a candidate protocol that is non-interactive and efficiently publicly verifiable, requiring only a single quantum device. 


\paragraph{The candidate protocol.}
In a recent breakthrough, Yamakawa and Zhandry~\cite{YZ24} constructed a proof of quantumness in the QROM. Their protocol defines a list-recoverable code $C \subset \Sigma^n$ and asks the prover to find a codeword $\mathbf{x} \in C$ such that $H_i(x_i) = 0$ for all coordinates $i \in [n]$, where $H = (H_1, \ldots, H_n)$ is a random oracle mapping symbols to bits. They gave a quantum algorithm that solves this problem using just a single round of parallel queries, while proving that no classical algorithm can succeed even with sub-exponentially many queries.

Beyond being a proof of quantumness, the Yamakawa-Zhandry protocol has a remarkable additional feature: the honest quantum algorithm's output is inherently random. Because the algorithm produces a uniformly random sample from the (large) set of valid codewords, no particular output is predictable in advance. Yamakawa and Zhandry conjectured that this is not an artifact of their algorithm but a \emph{necessary} feature of any successful strategy: any prover that causes the verifier to accept must be sampling from a high-min-entropy distribution. If true, this would yield a certifiable randomness protocol in the QROM, where the verifier's acceptance guarantees unpredictability of the output.

\paragraph{Certifiable randomness of the Yamakawa-Zhandry protocol.}
As partial evidence, Yamakawa and Zhandry proved the certifiable randomness guarantee under the Aaronson-Ambainis (AA) conjecture~\cite{AA14}. 
The AA conjecture formalizes the idea that quantum speedups require structure: formally, it asserts that any low-degree bounded polynomial (such as a quantum algorithm's acceptance probability) contains an highly influential coordinate. This in turn implies that classical query algorithms can simulate quantum query algorithms on average (i.e., on most inputs), establishing that exponential quantum advantages cannot arise in unstructured settings, but must instead rely on an underlying global structure in the input domain.
This conjecture is widely considered plausible, and
partial results support it in restricted settings (such as for block-multilinear forms with equal-magnitude coefficients~\cite{Mon12}, completely bounded block-multilinear forms~\cite{BSW22}, and random restrictions of low-degree polynomials~\cite{Bhattacharya25})
but it remains unproven in
general. The \cite{YZ24} reduction to the AA conjecture was thus a meaningful
step, but it left open the question of whether the certifiable randomness
guarantee could be established from first principles, without appeal to unproven
structural assumptions about quantum query complexity. 
 
The reliance on the AA conjecture is qualitatively different from relying on standard computational assumptions like factoring or LWE. Those are conjectures about the \emph{hardness of specific problems}; the AA conjecture is a sweeping claim about the \emph{structure of all quantum speedups}. Despite significant attention since its introduction over a decade ago~\cite{AA14}, it remains unproven in full generality, with partial results confirmed only in restricted settings. If the conjecture turned out to be false, the randomness guarantee of the YZ protocol would collapse: not because someone found a clever attack, but because the landscape of quantum query complexity was structured differently than expected.
 
It is therefore natural to ask:
 
\begin{center}
\emph{Can the certifiable randomness guarantee of the Yamakawa-Zhandry protocol\\ be established unconditionally, without the AA conjecture?}
\end{center}

\paragraph{Our result.}
We make progress on this question by proving the certifiable randomness guarantee \emph{unconditionally}---without the AA conjecture or any other unproven conjecture---against a restricted but meaningful class of quantum adversaries: those that make at most $o(\log \lambda)$ \emph{adaptive} query rounds to the random oracle. Crucially, the adversary may make polynomially many parallel quantum queries within each round; only the number of sequential rounds is bounded. Our security proof holds against computationally unbounded adversaries subject only to this query depth constraint.
 
\begin{theorem}[Informal]
The Yamakawa-Zhandry protocol satisfies $(D, h_\infty)$-certifiable min-entropy for any query depth $D(\lambda) = o(\log \lambda)$ and any min-entropy bound $h_\infty(\lambda) = o(\lambda^{c/2})$, where $c$ is a constant determined by the code parameters. That is, any adversary making $o(\log \lambda)$ rounds of (arbitrarily wide) parallel queries that causes the verifier to accept with noticeable probability must be sampling its output from a distribution with $\Omega(\lambda^{c/2})$ bits of min-entropy.
\end{theorem}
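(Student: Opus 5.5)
The plan is to follow the Yamakawa--Zhandry reduction from certifiable min-entropy to the Aaronson--Ambainis conjecture step by step, and to replace the one place where the full conjecture is used with an unconditional statement about \emph{low-degree} bounded polynomials. The degree is small because the adversary has small query depth. Suppose for contradiction that a depth-$D$ adversary $\cA$ makes the verifier accept with noticeable probability $\epsilon$, yet its output has min-entropy below $h_\infty$. Then with noticeable probability over $\RO$ there is a codeword $\bfx^*(\RO)\in C$ with $H_i(x^*_i)=0$ for all $i$ that $\cA^{\RO}$ outputs with probability at least $2^{-h_\infty}$.

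As in \cite{YZ24}, I would turn this into a family of bounded real functions of the oracle truth table. Namely, $p_{\bfx}(\RO)=\Pr[\cA^{\RO}\text{ outputs }\bfx]$, or a symbol-by-symbol version $p_{i,\sigma}(\RO)$ giving the probability that coordinate $i$ of the output equals $\sigma$. Since $\cA$ makes $D$ rounds of (arbitrarily wide) parallel queries, each such function is a polynomial of degree at most $2D$ in the $\pm1$ oracle bits. The parallel width only affects the number of monomials, not the degree, so it is harmless. The key point is that $2D=o(\log\secp)$.

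Next, instead of AA, I would invoke the known \emph{bounded-degree} influence theorem: Dinur--Friedgut--Kindler--O'Donnell, in the form of the Aaronson--Ambainis ``AA for constant degree'' corollary. It states that a degree-$d$ polynomial bounded in $[0,1]$ on the cube with variance $v$ has a coordinate of influence at least $\mathrm{poly}(v)\cdot 2^{-O(d)}$, or more crudely $\exp(-O(d^2\log(1/v))/v^{2})$. Iterating the standard ``query the most influential variable'' simulation then gives the following: a classical algorithm making $q=\exp(O(d^2))\cdot\mathrm{poly}(1/\delta)$ queries approximates each $p_{i,\sigma}(\RO)$ to error $\delta$ on all but a $\delta$ fraction of oracles. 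With $d=o(\log\secp)$ the factor $\exp(O(d^2))$ is $\secp^{o(\log \secp)}$. This is quasi-polynomial at worst, and it is still far below the sub-exponential $2^{\secp^{c}}$ classical query lower bound of \cite{YZ24} for finding a valid codeword. Choosing $\delta$ inverse-polynomially small, given $h_\infty=o(\secp^{c/2})$, keeps the whole simulation within budget.

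The remaining step reuses \cite{YZ24} essentially verbatim. The classical simulator estimates the heavy symbol at each coordinate, thereby reconstructing the high-probability codeword $\bfx^*$, which satisfies $H_i(x^*_i)=0$ for all $i$. This yields a classical algorithm solving the YZ search problem with sub-exponentially many queries and noticeable probability, contradicting their classical hardness theorem for list-recoverable codes. Here the $\secp^{c/2}$ threshold enters through list-recoverability: low min-entropy forces, for most coordinates, a small list of likely symbols. List recovery then pins down few candidate codewords, and the constant $c$ from the code parameters controls how much entropy loss is tolerated.

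I expect the main obstacle to be the quantitative middle step. The issue is making the degree-$d$ influence bound strong enough that the variance lost at each restriction, compounded over up to $q$ iterations and a union over all coordinates and symbols, still gives a classical query count subexponential in $\secp^{c}$ while requiring only $d=o(\log\secp)$. I would first try the DFKO bound directly, tracking the dependence on $d$ and $\delta$. If the $1/v^2$ dependence is too costly, I would fall back on a random-restriction argument in the spirit of \cite{Bhattacharya25}, which suffices here because the oracle is uniformly random.
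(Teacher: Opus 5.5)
\textbf{The proposal has a genuine gap: the degree bound is false.} It is not true that $D$ layers of parallel queries give output probabilities of degree at most $2D$ in the $\pm1$ oracle bits. A layer of $W$ parallel phase queries sends $\ket{z_1,\dots,z_W}$ to $\prod_{k=1}^{W}(-1)^{\RO(z_k)}\ket{z_1,\dots,z_W}$. So each layer can raise the degree of the amplitudes by $W$, and the output probabilities have degree up to $2DW=2Q$.

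Width is therefore not harmless. For instance, a single layer can query $\frac{1}{\sqrt2}(\ket{z_k}+\ket{z'_k})$ on each of its $W$ registers, learn the $W$ XORs $\RO(z_k)\oplus\RO(z'_k)$ exactly, and output their parity. That is a degree-$2W$ function computed at depth $1$.

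The theorem must hold for every polynomial width, including $W\ge n=\Theta(\secp)$ as used by the honest prover. So the relevant degree is polynomial in $\secp$, and the $2^{-O(d)}$ factor in the influence bound you cite makes the most-influential-variable simulation guarantee only $2^{O(Q)}$ queries. That is no longer below the $\ell=2^{\secp^c}$ threshold where the classical hardness of \cite{YZ24} applies. Removing the $2^{O(d)}$ loss at polynomial degree is exactly the open Aaronson--Ambainis conjecture. Query depth simply does not bound polynomial degree, so this route cannot be made unconditional.

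Separately, inverse-polynomial accuracy $\delta$ cannot single out a codeword whose probability is only $2^{-h_\infty}$. You would need accuracy about $2^{-h_\infty}$ plus a heavy-hitter search over prefixes, which is where \cite{YZ24} pay a $2^{O(h_\infty)}$ overhead.

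\textbf{How the paper uses depth instead.} It avoids polynomials and works with query weights.

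First, the swapping lemma plus a disjoint-reprogramming counting argument shows the following. Except with probability $2^{-(s-1)}$, a prover that outputs a correct $\bfx$ with probability $p=\delta 2^{-h_\infty}$ must put cumulative weight at least $p^2/(16Qn)$ on all but $s=\lfloor\zeta n\rfloor$ symbols of $\bfx$.

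Second, a depth-$D$ prover cannot do this for a correct codeword. At depth $1$ the weights are fixed before the oracle is seen. In general, take the first layer $q^*$ at which $n-s$ symbols of $\bfx$ reach $t_{q^*}=(2Qn/\ell)^{4^{-q^*}}$, and flip $\RO$ on the $s+1$ symbols still below $t_{q^*-1}$. The swapping lemma and the recurrence $t_q\ge Q/\ell+2Q\sqrt{Dn\,t_{q-1}}$ show the flipped oracle still gives weight at least $Q/\ell$ to $n-s$ symbols of $\bfx$.

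List recoverability with lists of size $\ell$ then caps the multiplicity at $L$, giving probability at most $L/2^{\lfloor\zeta n\rfloor}$. The doubly exponential thresholds are where $D=o(\log\secp)$ and $h_\infty=o(\secp^{c/2})$ come from. Width enters only polynomially through $Q$, which is what a degree-based argument cannot achieve.
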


Note that the honest Yamakawa-Zhandry algorithm uses just a single round of parallel queries. Our result therefore covers all adversaries with sub-logarithmically-many adaptive rounds beyond what the honest algorithm requires.


\bigskip
\noindent\textbf{Organization.} The remainder of this paper is organized as
follows. Section~\ref{sec:overview} gives a technical overview of our proof
strategy. Section~\ref{sec:prelims} introduces the formal model, code
parameters, and query-weight definitions used throughout. Section~\ref{sec:main}
contains the proof of our main result ruling out $o(\log \lambda)$-query
strategies. 

\section{Technical Overview}\label{sec:overview}

\paragraph{The Yamakawa-Zhandry Problem.}

Fix a code $C \subset \Sigma^n$ over a large alphabet $\Sigma$, built from an appropriate list-recoverable code over $\mathbb{F}_q$. A random oracle $H = (H_1, \ldots, H_n)$ independently maps each symbol $x_i \in \Sigma$ to a bit $H_i(x_i) \in \{0,1\}$. The \emph{Yamakawa--Zhandry problem}~\cite{YZ24} is to find a codeword $\mathbf{x} \in C$ such that every symbol hashes to zero:
\[
\mathbf{x} \in C \quad \text{and} \quad H_i(x_i) = 0 \;\;\text{for all } i \in [n].
\]

\paragraph{The quantum algorithm.}
Yamakawa and Zhandry~\cite{YZ24} showed a quantum algorithm that can solve this problem efficiently, and their algorithm can be implemented with a \emph{single layer of parallel queries}. In particular, parallel queries suffice to create the following quantum state
\[
\sum_{x_1: H_1(x_1) = 0}\ket{x_1} \otimes \sum_{x_2: H_2(x_2) = 0}\ket{x_2} \otimes \ldots \otimes \sum_{x_n: H_n(x_n) = 0}\ket{x_n}
\]
which can then be post-processed, without any additional queries, to obtain the state
\[\sum_{\substack{x \in C\\
\forall i, H_i(x_i) = 0}}\ket{x}\]
Measuring this state collapses the superposition onto a random codeword satisfying the desired constraints.

Yamakawa-Zhandry also prove that no classical algorithm can solve this search problem, even with sub-exponentially many queries to the random oracle. Their proof crucially uses the {\bf list-recoverability} of the code $C$.

 
To explain what list-recoverability means, consider the following question: suppose that for each coordinate $i \in [n]$, someone hands you a ``short list'' $S_i \subseteq \Sigma$ of candidate symbols. How many codewords $\mathbf{x} \in C$ can be ``mostly consistent'' with these lists---meaning $x_i \in S_i$ for at least $(1-\zeta)n$ coordinates? List recoverability says that very few can. Concretely, if each list has size $|S_i| \leq \ell = 2^{\left(\lambda^c\right)}$, then at most $L = 2^{\tilde{O}(\lambda^{c'})}$ codewords are mostly consistent, where $c < c' < 1$ and $\zeta = \Omega(1)$ are constants determined by the code parameters.



For a classical adversary making $Q$ queries, each query reveals whether one symbol hashes to zero, so after $Q$ queries the adversary has discovered at most $Q$ symbols mapping to $0$. List recoverability guarantees that at most $L$ codewords have most of their symbols among these $Q$ discovered zeros. For each such codeword, the remaining unchecked positions must all independently hash to zero, which happens with exponentially small probability. So a classical adversary is overwhelmingly unlikely to find a correct codeword.

\paragraph{Certified Randomness.}
A crucial feature of Yamakawa-Zhandry's algorithm is that its output is \emph{inherently random}: because the measurement produces a random sample from a large set of valid codewords, no particular codeword is predictable in advance. The honest algorithm gives only negligible query weight to whatever codeword it eventually outputs---it does not ``aim'' at any specific answer.

Yamakawa and Zhandry conjectured that the randomness of the honest algorithm is not an artifact of their particular construction, but a necessary feature of \emph{any} successful quantum strategy: any adversary that finds a valid codeword must be sampling from a high-min-entropy distribution. If true, this transforms the protocol into a source of \emph{certifiable randomness}---the classical verifier checks $\mathbf{x} \in C$ and $H(\mathbf{x}) = 0$, and acceptance alone guarantees that the output was unpredictable. They proved this conjecture under the Aaronson--Ambainis (AA) conjecture~\cite{AA14}, a broad but unproven structural hypothesis about quantum speedups.

Our goal is to obviate the need for the AA conjecture. We aim to prove the certifiable randomness property of the YZ algorithm assuming only that the adversary is bounded in the number of queries they make to the random oracle. We prove that if the adversary is depth-limited, making $o(\log \secp)$ layers of parallel queries, then they cannot break the certifiable randomness property.


\paragraph{Our Techniques, in a Nutshell.}
At a high level, we show that any adversary trying to cheat---to output a low-entropy/\emph{predictable} codeword---must interact with the oracle in a way that is fundamentally different from the honest algorithm. While the honest algorithm spreads its query weight thinly and uniformly, a cheating adversary must concentrate query weight on its intended output. We show that this concentration is both {\em necessary} and {\em impossible} with just a few adaptive queries, leading to our desired contradiction.
\begin{itemize}
\item \noindent\textbf{The Heavy Query Lemma} (Lemma~\ref{thm:must-heavy-query}) shows that any adversary that breaks the certifiable min-entropy guarantee---by outputting a correct answer $\mathbf{x}$ with non-negligible conditional probability---must have allocated non-negligible \emph{query weight} to all but a small (say, $\frac{1}{100}$) fraction of the symbols of $\mathbf{x}$. In other words, a low-entropy adversary must ``check'' that almost all symbols of $\mathbf{x}$ hash to zero before committing to it. Otherwise, they would continue to output this answer with high probability even if the oracle is changed to hash some symbols of $\bfx$ to $1$.
 \item 
\noindent\textbf{The Hardness of Heavy Querying.} (Lemma~\ref{thm:2-query-security}) shows that an adversary making $o(\log \lambda)$ query layers \emph{cannot} give such heavy query weight to the symbols of any correct codeword. Intuitively, with such few adaptive rounds, the adversary must allocate its query weight before learning enough about the oracle to concentrate on a correct codeword.
\end{itemize}

We now describe our techniques in detail.

\paragraph{Query Weight and the Swapping Lemma.}
 
The central analytic quantity in our proof is \emph{query weight}. For a quantum adversary $\mathcal{A}^H$ and a symbol $(i, x_i)$, the cumulative query weight $w^{\leq j}_{i,x_i}(H)$ measures the cumulative probability, across the first $j$ query layers, that measuring a query register would yield $(i, x_i)$. Query weight is a scarce resource: an adversary making $Q$ total queries can distribute at most $Q$ total query weight across all input symbols.
 
The \emph{swapping lemma} (Lemma~\ref{thm:swapping-lemma}) bounds how much the adversary's quantum state can change when the oracle is reprogrammed on a set $X$ of inputs, as a function of the total query weight on those inputs. We have that:
\begin{equation}
\label{eq:fir}
\left\| |\psi^H_j\rangle - |\psi^{H'}_j\rangle \right\|_2 \;\leq\; 2 \cdot \sqrt{j \cdot \sum_{(i,x) \in X} w^{\leq j}_{i,x}(H)}.
\end{equation}
where $\ket{\psi^\cO_j}$ denotes the adversary's state after $j$ query layers when it interacts with oracle $\cO$.
If the adversary gives small query weight to the reprogrammed positions, its state---and therefore its output distribution---barely changes. This allows us to {\em reprogram} the oracle, as we describe in more detail below.
 
\paragraph{Step 1: Low-Entropy Adversaries Must Heavily Query Their Answer.}
Suppose an adversary outputs a correct codeword $\mathbf{x}$ with high conditional probability $p$ given an oracle $h$, i.e., \[\Pr[\mathcal{A}^H \to \mathbf{x} \mid H = h] \geq p,\] but gives small query weight during $D$ query layers \[w^{\leq D}_{i,x_i}(h) < \frac{p^2}{16Qn}\] to at least $s+1$ symbols of $\mathbf{x}$. We show this is impossible by deriving a contradiction via a counting argument.
 
\paragraph{Constructing bad oracles.}
Let $I_2 \subseteq [n]$ be the set of $(s+1)$ symbol positions where the adversary gives small query weight. For each non-empty subset of $\{(i, x_i)\}_{i \in I_2}$, reprogram $h$ to flip the output from $0$ to $1$ on those inputs. Call the resulting oracle $h'$. This produces a set $T_{h,\mathbf{x}}$ of at least $2^s$ ``bad'' oracles, each of which makes $\mathbf{x}$ incorrect (since $h'(\mathbf{x}) \neq 0$).
 
\paragraph{The adversary doesn't notice.}
By the swapping lemma, since the adversary gives small total query weight to the reprogrammed positions, its output distribution changes by at most \[2\sqrt{D \cdot |I_2| \cdot (p^2/16Qn)} \leq 2\sqrt{Q \cdot n \cdot (p^2/16Qn)} \leq p/2.\]
Therefore $\Pr[\mathcal{A}^{h'} \to \mathbf{x}] \geq p/2$ on every bad oracle.
 
\paragraph{Counting.}
 
Let $S$ be the set of all ``good'' pairs $(h, \mathbf{x})$ that satisfy all three conditions simultaneously: (1)~the adversary gives small query weight to many positions of $\mathbf{x}$, (2)~$\Pr[\mathcal{A}^h \to \mathbf{x}] \geq p$, and (3)~$\mathbf{x}$ is a correct codeword under $h$. Our goal is to show that $\Pr[(H, X) \in S]$, which is the probability that a randomly sampled oracle-output pair lands in $S$, is negligible.
 
We have shown that each good pair $(h, \mathbf{x}) \in S$ generates a set $T_{h,\mathbf{x}}$ of at least $2^s$ bad oracles, and that $\Pr[\mathcal{A}^{h'} \to \mathbf{x}] \geq \frac{1}{2}\Pr[\mathcal{A}^{h} \to \mathbf{x}]$ on each of them. The bad sets are disjoint: given any $(h', \mathbf{x}) \in T_{h,\mathbf{x}}$, the original oracle $h$ is uniquely determined by reprogramming (back) $h'$ to map $\mathbf{x}$ back to $0$. Since different good pairs produce disjoint bad sets, we can sum over all of them without overcounting. Each term $\frac{1}{|\mathcal{H}|} \cdot \Pr[\mathcal{A}^{h'} \to \mathbf{x}]$ is the joint probability of drawing oracle $h'$ and the adversary outputting $\mathbf{x}$, and disjoint events sum to at most $1$:
\[
1 \;\geq\; \sum_{(h,\mathbf{x}) \in S} \;\sum_{h' \in T_{h,\mathbf{x}}} \frac{1}{|\mathcal{H}|} \cdot \Pr[\mathcal{A}^{h'} \to \mathbf{x}] \;\geq\; \sum_{(h,\mathbf{x}) \in S} 2^s \cdot \frac{1}{2|\mathcal{H}|} \cdot \Pr[\mathcal{A}^{h} \to \mathbf{x}] \;=\; 2^{s-1} \cdot \Pr[(H,X) \in S].
\]
Rearranging gives $\Pr[(H,X) \in S] \leq 2^{-(s-1)}$, which is negligible since $s = \omega(\log \lambda)$.

Thus, with overwhelming probability, any adversary that outputs a high-probability correct codeword must give non-negligible query weight to all but $s$ of its symbols. We call this behavior \emph{heavily querying} the answer.

\paragraph{Step 2: Low-Depth Adversaries Cannot Heavily Query a Correct Answer.}
 
This is the heart of the argument, where we exploit the adversary's limited query depth.
 
\paragraph{The key intuition.}
As discussed above, the honest Yamakawa--Zhandry algorithm gives only negligible query weight to the codeword it eventually outputs. A low-entropy adversary, by contrast, must concentrate non-negligible weight on a small set of codewords (as already shown in Step 1). However, with few query layers, the adversary is forced to make these ``bets'' early, before learning where the oracle maps symbols to zero. Such an adversary is essentially gambling.

We first describe a warm-up that captures the core idea in the simplest possible setting.
 
\paragraph{Warm-up: a single parallel query.}
Consider an adversary that makes just one layer of parallel queries. It prepares some initial quantum state $|\psi_0\rangle$, queries the oracle once on all coordinates in superposition, applies some unitary, and measures to produce its output. The key observation is that the adversary's query weight is \emph{oracle-independent}: the state $|\psi_0\rangle$ is fixed before the oracle is queried, so the query weight $w^1_{i,x_i}$ given to each symbol $(i, x_i)$ is the same regardless of which oracle $h$ was chosen.
 
 
Now consider reprogramming the oracle on a subset of $\zeta n$ symbols of $\bfx$. Flipping even one of these positions from $0$ to $1$ makes $\mathbf{x}$ incorrect. We can therefore construct $2^{\Omega(n)}$ bad oracles from each good oracle, and the adversary still heavily queries $\mathbf{x}$ on all of them. By list recoverability, each bad oracle is associated with at most $L = 2^{\tilde{O}(\lambda^{c'})}$ good oracles (since at most $L$ codewords are heavily queried). A similar counting argument as in Step~1 gives:
\[
\frac{|S|}{|\mathcal{H}|} \;\leq\; \frac{L}{2^{\Omega(n)}} \;=\; \frac{2^{\tilde{O}(\lambda^{c'})}}{2^{\Omega(\lambda)}} \;=\; \mathrm{negl}(\lambda).
\]
This completes the single-query case. The argument is clean because query weights are fixed upfront, so reprogramming the oracle on low-weight positions is ``free.''

\paragraph{The challenge with multiple queries.}
With $D > 1$ adaptive query layers, the argument above breaks down. After the first query, the adversary's state---and therefore the query weights in subsequent layers---can depend on the oracle. Reprogramming the oracle on a set of positions may change not only the adversary's final output but also how it allocates query weight in later rounds. We can no longer argue that the adversary heavily queries $\mathbf{x}$ independently of the oracle.
 
The solution is a \emph{bootstrapping} argument. Instead of arguing about query weights all at once, we track how they build up layer by layer, using a sequence of growing thresholds to control the perturbation at each step.

\paragraph{The idea: find the right moment to reprogram.}
The single-query-layer warm-up suggests the following strategy: find positions with low query weight, reprogram them, and argue that the adversary's behavior is unchanged by the reprogramming. With multiple adaptive layers, query weights depend on the oracle, so we cannot simply point to low-weight positions at the end of the computation---they may have been low only because the oracle steered the adversary elsewhere.
 
Instead, we look for the right \emph{moment} during the computation. The adversary starts with zero query weight on $\mathbf{x}$ and ends with high query weight on most of $\mathbf{x}$'s symbols (by Step~1). Somewhere in between, there is a \emph{critical layer} $q^*$: the first layer after which the adversary has accumulated substantial weight on $(1-\zeta)n$ symbols of $\mathbf{x}$. Just before this layer, after $q^*-1$ layers, the adversary has \emph{not yet} reached this threshold, so at least $\zeta n$ symbols of $\mathbf{x}$ still have low accumulated weight. These are the positions we reprogram.
 
\paragraph{Controlling the perturbation with thresholds.}
The crux of the argument is showing that reprogramming these low-weight positions does not significantly affect the adversary's query weights on the \emph{other} positions---the ones that \emph{are} heavily queried. If we can show this, then the adversary still heavily queries $\mathbf{x}$ under the reprogrammed oracle $h'$, even though $\mathbf{x}$ is no longer correct under $h'$, and the counting argument goes through as in the warm-up.
 
The swapping lemma tells us that on each individual query layer $q \leq q^*$, the perturbation to any symbol's query weight is at most $2W \sqrt{D n \cdot t_{q^*-1}}$, where $t_{q^*-1}$ is an upper bound on the weight that each reprogrammed position had accumulated through layer $q^*-1$, $W$ is the number of parallel queries per layer (the width) and $D$ is the total number of query layers (the depth). Summing over $q^* \leq D$ layers, the total perturbation to the cumulative weight of any heavily-queried symbol is at most $2D W \sqrt{Dn \cdot t_{q^*-1}} = 2Q\sqrt{Dn \cdot t_{q^*-1}}$.
 
For this perturbation to be harmless, we need the heavily-queried symbols to have accumulated enough weight by layer $q^*$ to ``absorb'' it---that is, their weight at layer $q^*$ must exceed $Q/\ell + 2Q\sqrt{Dn \cdot t_{q^*-1}}$, so that even after subtracting the perturbation, they still have weight $\geq Q/\ell$.
 
\paragraph{The threshold sequence.}
To make this work, we define a sequence of growing thresholds $t_0 < t_1 < \cdots < t_D$ and say that the adversary has ``reached threshold $q$'' if at least $(1-\zeta)n$ symbols of $\mathbf{x}$ have cumulative weight $\geq t_q$ after $q$ layers. The critical layer $q^*$ is the first layer where the adversary reaches its threshold.
 
The thresholds are defined by
\[
t_q = \left(\frac{2Qn}{\ell}\right)^{4^{-q}}
\]
which is a doubly-exponential interpolation between $t_0 = 2Qn/\ell$ (very small, since $\ell$ is superpolynomial) and $t_Q$, which we verify is at most $p^2/16Qn$ (the threshold from Step~1). The key property is the recurrence, for all $q \in [2, Q]:$
\[
t_q \;\geq\; \frac{Q}{\ell} \;+\; 2Q\sqrt{Dn \cdot t_{q-1}},
\]
which is exactly the condition we need: if the adversary reaches threshold $q$ at layer $q^*$, and the reprogrammed positions have weight below $t_{q^*-1}$, then the perturbation $2Q\sqrt{Dn \cdot t_{q^*-1}}$ is small enough that the heavily-queried symbols retain weight $\geq Q/\ell$ after reprogramming. This recurrence is what limits us to $Q = o(\log \lambda)$ query layers---with more layers, the thresholds grow too quickly and the argument breaks down.
 
\paragraph{Completing the counting.}
Once we know the adversary still heavily queries $\mathbf{x}$ under the bad oracle $h'$, we need to bound how many good pairs $(h, \mathbf{x})$ can map to the same bad oracle $h'$. This is where list recoverability enters.
 
Fix any oracle $h'$ (good or bad). The adversary $\mathcal{A}^{h'}$ makes $D$ total layers of $W$ parallel queries each, so it distributes at most $Q = D \cdot W$ total query weight across all symbols. For each coordinate $i \in [n]$, let $S_i$ be the set of symbols $x \in \Sigma$ that receive cumulative weight $w^{\leq D}_{i,x}(h') \geq Q/\ell$. Since the total weight is $\leq Q$, each coordinate can have at most $\ell$ such symbols: $|S_i| \leq \ell$. Now, if $\mathcal{A}^{h'}$ heavily queries a codeword $\mathbf{x}$, then $x_i \in S_i$ for at least $(1-\zeta)n$ coordinates---the codeword is ``mostly consistent'' with the lists $S_1, \ldots, S_n$. By list recoverability, at most $L = 2^{\tilde{O}(\lambda^{c'})}$ codewords can be mostly consistent with these lists.
 
Each bad oracle $h'$ can belong to $T_{h,\mathbf{x}}$ for at most $L$ distinct good pairs $(h, \mathbf{x})$, because each such pair is associated with a distinct heavily-queried codeword $\bfx$, and there are at most $L$ of those. Since each good pair generates at least $2^{\lfloor \zeta n \rfloor}$ bad oracles, we get:
\[
\frac{|S|}{|\mathcal{H}|} \;\leq\; \frac{L}{2^{\lfloor \zeta n \rfloor}} \;=\; \frac{2^{\tilde{O}(\lambda^{c'})}}{2^{\Omega(\lambda)}} \;=\; \mathrm{negl}(\lambda)
\]
since $c' < 1$. Thus, with overwhelming probability over the choice of oracle, no correct codeword is heavily queried by the adversary.
 
\paragraph{Putting the Pieces Together.}
 
Suppose toward contradiction that some $o(\log \lambda)$-depth adversary breaks the certifiable min-entropy guarantee with non-negligible probability. Then with non-negligible probability over the oracle $h$, the adversary succeeds with noticeable probability and its output has low min-entropy. Step~1 says that such an adversary must heavily query its answer. But Step~2 says that with overwhelming probability over $h$, no correct codeword is heavily queried---a contradiction. This completes the proof.

\paragraph{Discussion.}
Our techniques are limited to $o(\log \lambda)$ adaptive rounds because our threshold bootstrapping technique requires thresholds that grow doubly exponentially with the number of rounds. When the number of rounds exceeds $O(\log \lambda)$, the thresholds exceed the bounds needed for the counting argument. Extending the certifiable randomness guarantee to adversaries making $\omega(\log \lambda)$ or polynomially many adaptive rounds remains an important open problem; following Yamakawa and Zhandry, the guarantee for such adversaries currently still relies on the AA conjecture.

\section{Preliminaries}\label{sec:prelims}
\subsection{Certifiable Min-Entropy}\label{sec:prelim-certifiable-min-entropy}
A certifiable min-entropy protocol is an interactive proof system where any prover that causes the verifier to accept with noticeable probability must be sampling their accepting proofs from a distribution with high min-entropy. In the random oracle model, the prover and verifier have quantum query access to the random oracle, and the min-entropy of the proof is computed after conditioning on the choice of oracle. That way, the proof will provide additional randomness beyond the randomness of the oracle. 

\Cref{def:certifiable-min-entropy} below recalls the definition of a certifiable min-entropy protocol from \cite{YZ24} (almost verbatim). We modify the min-entropy property to allow us to specify the prover's query depth $D$ and min-entropy bound $\minent$.

\begin{definition}[Certifiable Min-Entropy Protocol, \cite{YZ24} Definition 3.5]\label{def:certifiable-min-entropy}
    A (keyless, non-interactive, publicly verifiable) \textbf{certifiable min-entropy protocol} relative to a random oracle consists of the algorithms $(\Prove, \Verify)$, which have the following syntax.

    \paragraph{Syntax.}
    \begin{itemize}
        \item $\Prove^\RO (1^\secp, 1^\minent) \to \pi$: This is a QPT algorithm that takes the security parameter $1^\secp$ and a min-entropy threshold $1^\minent$ as input. It makes $\poly(\secp, \minent)$ quantum queries to the random oracle $\RO$, and outputs a classical proof $\pi$.
        \item $\Verify^\RO (1^\secp, 1^\minent, \pi) \to x$: This is a deterministic classical polynomial-time algorithm that takes $1^\secp$, $1^\minent$, and a proof $\pi$ as input; it makes $\poly(\secp, \minent)$ queries to the random oracle $\RO$, and outputs either a string $x$ (whose length may depend on $\minent$), or $\bot$ indicating rejection.
    \end{itemize}
    
    We require a certifiable min-entropy protocol to satisfy the following properties:

    \paragraph{Correctness.} For any $\minent = \minent(\secp)$, we have
    \[\Pr_{\RO}\left[\Verify^\RO (1^\secp, 1^\minent, \pi) = \bot : \pi \gets \Prove^\RO (1^\secp, 1^\minent)\right] \leq \negl(\secp).\]

    \paragraph{$(D, \minent)$-Certifiable Min-Entropy.} Given a polynomially-bounded query depth $D = D(\secp)$ and a polynomially-bounded min-entropy $\minent = \minent(\secp)$, the protocol satisfies \textbf{$(D, \minent)$-certifiable min-entropy} if for any polynomially-bounded query width $W = W(\secp)$, any unbounded-time adversary $\cA$ that makes $D$ layers of $W$ parallel quantum queries to $\RO$, and any inverse polynomial function $\delta(\cdot)$, there is a negligible function $\negl(\cdot)$ such that the following holds. Let $\cA^\RO_\top (1^\secp, 1^\minent)$ be the distribution $\Verify^\RO[1^\secp, 1^\minent, \cA^\RO(1^\secp, 1^\minent)]$, conditioned on the output not being $\bot$. Then:
    \begin{align*}
        \Pr_{\RO}\left[\Pr\left[\Verify^\RO [1^\secp, 1^\minent, \cA^\RO(1^\secp, 1^\minent)] \neq \bot\right] \geq \delta(\secp) \land \Minent\left(\cA^\RO_\top (1^\secp, 1^\minent)\right) \leq \minent(\secp)\right] \leq \negl(\secp).
    \end{align*}
    If $(D, \minent)$ are not specified, we say that the protocol satisfies \textbf{certifiable min-entropy} if for any polynomial functions $D, \minent$, the protocol satisfies $(D, \minent)$-certifiable min-entropy. 
\end{definition}

\subsection{The Yamakawa-Zhandry Protocol}\label{sec:prelim-YZ-protocol}
Here we present a candidate construction of a certifiable min-entropy protocol that is based on \cite{YZ24}'s proof of quantumness. The protocol defines a list-recoverable code $C$ and a random oracle $\RO$ and asks the adversary to find a codeword $\bfx \in C$ such that $\RO(\bfx) = \mathbf{0}$.

\paragraph{Parameters:} Let $\secp \in \bbN$ be the security parameter. Let $q = 2^{2 \lfloor \log \secp \rfloor}$, $N = q - 1$, $m = 2^{\lfloor \log \secp \rfloor} + 1$, $n = \frac{N}{m} = 2^{\lfloor \log \secp \rfloor} - 1$. Let $k = \alpha \cdot N$, where $\alpha$ is an arbitrary constant in the range $\left(\frac{5}{6}, 1\right)$ such that $\alpha \cdot N \in \bbN$. Let $\gamma$ be an arbitrary generator of $\bbF_q^*$.
    
\paragraph{Folded Reed-Solomon Code:} Let $\Sigma = \bbF_q^m$ be the alphabet of the code. Let the code be $C \subset \Sigma^n$.
    
We can interpret a word $\bfx \in \Sigma^n$ as a vector $\tilde{\bfx} \in \bbF_q^{m \cdot n} = \bbF_q^N$. We say that $\bfx \in C$ if and only if there is a polynomial $f \in \bbF_q[X]$ of degree $\leq k$ such that 
    \[\tilde{\bfx} = \left(f(\gamma), f(\gamma^2), \dots, f(\gamma^{N})\right)\]

Note that any two distinct codewords differ on at least $(1-\alpha) \cdot n$ symbols.

\begin{theorem}[List Recoverability, \cite{YZ24}, lemma 4.2]\label{thm:list-recoverability}
    Let us define the specific list-recoverability parameters of the code. Let $c, c', \zeta, \ell, L$ be the parameters of the list recovery property of the code (in \cite{YZ24}, lemma 4.2). We have that $c, c'$ are constants satisfying $0 < c < c' < 1$. Next, $\zeta = \Omega(1), \ell = 2^{\left(\secp^c\right)}, L = 2^{\tilde{O}(\secp^{c'})}$. 

    Then for any sets $S_1, \dots, S_n \subset \Sigma$ of size $\leq \ell$, the number of codewords $\bfx \in C$ for which $x_i \in S_i$ for at least $(1-\zeta)n$ values of $i \in [n]$ is $\leq L$.
\end{theorem}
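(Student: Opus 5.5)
This statement is imported directly from \cite{YZ24}, Lemma~4.2. The plan is therefore not to reprove list recovery of folded Reed--Solomon codes from scratch. Instead, I would instantiate a known list-recovery theorem for folded RS codes with the parameters fixed in \Cref{sec:prelim-YZ-protocol}, and check that the resulting bounds have the stated shape. Concretely, I would take the Guruswami--Rudra style list-recovery theorem in the form used by \cite{YZ24}. Roughly, it says: a folded RS code with folding parameter $m$, block length $n$ and rate $\alpha$ is $(\zeta,\ell,L)$-list-recoverable whenever the agreement fraction $1-\zeta$ exceeds a threshold depending on $\alpha$, $\ell$, $m$ and an auxiliary interpolation parameter $s \le m$. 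In that regime the list size is at most $q^{O(s)}$, or a similar quantity polynomial in $q^s$.

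The steps in order are as follows.
\begin{enumerate}
\item Recall the algebraic mechanism, so that it is clear which quantities enter the bound. One interpolates a nonzero polynomial $Q(X,Y_1,\dots,Y_s)$ of bounded weighted degree that vanishes on every tuple $(\gamma^{(i-1)m+j}, y_j,\dots,y_{j+s-1})$ arising from the lists $S_i$. This is possible once the number of monomials exceeds the number of constraints, which is about $n\ell(m-s+1)$.
\item Any degree-$\le k$ polynomial $f$ whose folded codeword has $x_i\in S_i$ on $(1-\zeta)n$ coordinates makes $Q(X,f(X),f(\gamma X),\dots,f(\gamma^{s-1}X))$ have more roots than its degree, so this polynomial is identically zero.
\item Use $f(\gamma X)\equiv f(X)^q \bmod (X^{q-1}-\gamma)$, or in the linear-algebraic variant the fact that the solutions form a low-dimensional affine space. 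This bounds the number of such $f$ by $L$.
\item Substitute $q=2^{2\lfloor\log\secp\rfloor}$, $m\approx\secp$, $n\approx\secp$, $\alpha\in(5/6,1)$, and choose $s$ as an appropriate power of $\secp$. The goal is to confirm that there are constants $0<c<c'<1$ and $\zeta=\Omega(1)$ with $\ell=2^{\secp^c}$ and $\log L=\tilde O(\secp^{c'})$. Since $\log q=O(\log\secp)$, a list bound of the form $q^{O(\secp^{c'})}$ gives exactly $2^{\tilde O(\secp^{c'})}$.
\end{enumerate}

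The main obstacle is step~4, and specifically the tension between the enormous list size $\ell=2^{\secp^c}$ and the interpolation step. The number of constraints grows linearly in $\ell$, so with $\ell$ superpolynomial the weighted degree of $Q$ would swamp the agreement bound $(1-\zeta)n(m-s+1)$ unless $s$ is taken large. Making $s$ large, in turn, inflates $L\approx q^{s}$. I would first try to balance these by taking $s=\secp^{c'}$ with $c'>c$: this makes the interpolation bound $\ell^{1/s}$-type factors (which scale like $2^{\secp^{c-c'}}$) tend to $1$, which is what the agreement condition needs. It also keeps $\log L=O(s\log q)=\tilde O(\secp^{c'})$. The remaining check is that the agreement threshold then stays below $1-\zeta$ for a constant $\zeta$, given that $\alpha<1$ is a fixed constant.

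If this balancing does not close with the exact parameters above, the fallback is simply to cite \cite{YZ24}, Lemma~4.2, where this instantiation is carried out.
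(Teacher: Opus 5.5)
Your plan agrees with the paper, which gives no argument of its own and simply imports the statement from \cite{YZ24}, Lemma~4.2. Your step~4 balancing ($s=\secp^{c'}$ with $c<c'<1$, so that $\ell^{1/(s+1)}\to 1$, $m/(m-s+1)\to 1$, the agreement threshold tends to $\alpha<1$, and $\log L=O(s\log q)=\tilde O(\secp^{c'})$) is the right way to see where those parameters come from, but if you actually rederive steps~1--3, two corrections are needed: the interpolation must impose vanishing with multiplicity $r\gg s$, since plain vanishing as in your step~1 loses a factor $((s+1)!)^{1/(s+1)}\approx s/e$ in the agreement threshold and pushes it above $1$ for $s=\secp^{c'}$, and the linear-algebraic variant is unusable here because its threshold grows like $\ell/(s+1)$ while $\ell=2^{\secp^c}\gg m\ge s$.
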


\paragraph{Random Oracle:} For each $i \in [n]$, let $\cH^{(i)}$ be the set of all functions $\RO_i$ that map $\Sigma \to \bit$. Let $\cH = \cH^{(1)} \times \dots \times \cH^{(n)}$. The random oracle $\RO = (\RO_1, \dots, \RO_n)$ is sampled uniformly at random from $\cH$. Given an input $\bfx = (x_1, \dots, x_n) \in \Sigma^n$, let $\RO(\bfx) = \left(\RO_1(x_1), \dots, \RO_n(x_n)\right)$.

\paragraph{Candidate Construction:}\label{cert-min-ent-construction}
Here is the candidate construction of a certifiable min-entropy protocol. This is essentially the same as \cite{YZ24}'s proof of quantumness protocol. 

\begin{itemize}
    \item $\mathsf{Setup}$: Given the security parameter $\secp$, compute the parameters $q, N, m, n, \alpha, k$ as described above. This determines the code $C$ and the oracle's sample space $\cH$. Then sample $\RO \getsr \cH$.
    \item $\Prove^\RO (1^\secp, 1^\minent)$: This algorithm is the same as the $\Prove^\RO(1^\secp)$ algorithm constructed in \cite{YZ24} section 6, except that our algorithm makes its queries to $\RO$ in parallel, and our algorithm looks for preimages of $0$ rather than $1$.
    \item $\Verify^\RO (1^\secp, 1^\minent, \pi)$: This algorithm is the same as the $\Verify^\RO (1^\secp, \pi)$ algorithm constructed in \cite{YZ24} section 6, except that our algorithm checks that $\RO_i(x_i) = 0$ (rather than $1$) for all $i \in [n]$.
    \begin{enumerate}
        \item Parse $\pi = \bfx = (x_1, \dots, x_n)$.
        \item Check that $\bfx \in C$ and $\RO_i(x_i) = 0$ for all $i \in [n]$.
        \item If all checks pass, then output $\bfx$. Otherwise, output $\bot$.
    \end{enumerate}
\end{itemize}

\cite{YZ24} constructed their certifiable min-entropy protocol by using complexity leveraging to select the security parameter of the underlying proof of quantumness protocol. We omit the complexity leveraging here in order to study the min-entropy properties of the basic protocol.

\begin{lemma}\label{thm:syntax-and-correctness-of-CR-protocol}
    The candidate construction described above satisfies the syntax and correctness properties of a certifiable min-entropy protocol (\cref{def:certifiable-min-entropy}).
\end{lemma}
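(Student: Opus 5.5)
The plan is to reduce both requirements of the lemma to properties already established by Yamakawa and Zhandry for their proof-of-quantumness protocol. The only changes in our construction are that the honest prover issues its queries in one parallel layer and that it looks for preimages of $0$ instead of $1$. Neither change affects the arguments.

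\emph{Syntax.} $\Prove^\RO$ is the \cite{YZ24} quantum algorithm, so it runs in quantum polynomial time. It makes $\poly(\secp)$ queries: one query to each $\RO_i$ for each of the $n$ coordinates, possibly repeated polynomially many times for amplification. Its output is a classical string $\bfx \in \Sigma^n$. $\Verify^\RO$ is deterministic and classical. It tests membership in $C$ by checking that the $N$ evaluation points $\tilde\bfx$ interpolate a polynomial of degree $\le k$, which takes $\poly(q)=\poly(\secp)$ time. It then makes $n$ classical oracle queries and outputs $\bfx$ or $\bot$.

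\emph{Parallel queries.} I will verify that the \cite{YZ24} prover queries each $\RO_i$ only on an oracle-independent superposition. It prepares $\sum_{x_i}\ket{x_i}$ in each coordinate, computes $\RO_i(x_i)$ into an ancilla, and measures or uncomputes that ancilla. Since no query's input depends on an earlier query's answer, all $n$ queries can be issued in a single layer.

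\emph{Preimages of $0$ versus $1$.} The map $\RO \mapsto \RO \oplus \mathbf{1}$ is a bijection of $\cH$ that preserves the uniform distribution. An algorithm that finds preimages of $1$ can therefore be run on the flipped oracle $\RO\oplus\mathbf 1$, which is simulated by XORing $1$ into each answer register. The result finds preimages of $0$, and its failure probability over a uniform $\RO$ is exactly the same. Correctness, meaning that the rejection probability is $\negl(\secp)$, then follows from the correctness theorem of \cite{YZ24} section 6 with the parameters $q,N,m,n,k$ fixed above.

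The main obstacle is making sure those parameters match the regime in which \cite{YZ24} prove correctness, and that the proof does not rely on the complexity leveraging we dropped. Correctness of their quantum algorithm uses only two facts: the code $C$ has a suitable dual-distance/decoding property, and the QFT-based state-generation step succeeds with all but negligible probability. Complexity leveraging only enters their soundness and min-entropy claims, so I expect it is not needed here. If the exact parameter setting differs, I would re-derive their correctness bound for our $q,N,m,n,k$ directly.
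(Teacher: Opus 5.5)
Your proposal is correct and follows the paper's route: the paper asserts the syntax property by inspection and cites \cite{YZ24} Lemma 6.3 for correctness, and your bijection $\RO \mapsto \RO \oplus \mathbf{1}$ and your single-layer check just make explicit the two modifications the paper glosses over. One small quibble: only the measure-and-retry preparation keeps the queries in one layer, since uncomputing the ancilla would need a second, sequential query; parallelism is not actually required by the syntax or correctness properties anyway.
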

\begin{proof}
    It is clear by inspection that the syntax property is satisfied. Next, \cite{YZ24} lemma 6.3 implies that the correctness property is satisfied as well.
\end{proof}

\subsection{Query-Bounded Algorithms}\label{sec:prelim-query-bounded-algorithms}
Here we define a quantum algorithm $\cA$ that makes $D$ layers of $W$ parallel queries.

Let $\cA^\RO$ denote a quantum algorithm $\cA$ with quantum query access to oracle $\RO$. $\cA$ is parametrized by a query width $W \geq 1$, which is the number of parallel queries $\cA$ can make, and a query depth $D \geq 1$, which is the number of query layers it can make in sequence. Let $Q = W \cdot D$ be the total number of queries $\cA$ can make. 

Without loss of generality, let $\cA^\RO$ operate as follows: 
\begin{itemize}
    \item $\cA$ starts with an initial pure state $\ket{\psi_0} = \ket{\psi^\RO_0}$ on registers $R = R_{Q_1} \times \ldots \times R_{Q_W} \times R_A$, a sequence of unitaries $(U_1, \dots, U_D)$, and a measurement operator $M$.
    \item For each layer $j \in [D]$:
    \begin{itemize}
        \item $\cA$ submits the query registers $R_{Q_1} \times \ldots \times R_{Q_W}$ of $\ket{\psi_{j-1}^\RO}$ to the oracle $\RO$. Then $\RO$ acts as a phase oracle on each $R_{Q_k}$, for all $k \in [W]$, and returns $R_{Q_1} \times \ldots \times R_{Q_W}$ to $\cA$.
        \item $\cA$ applies $U_j$ to its state to obtain a new state $\ket{\psi_{j}^\RO}$.
    \end{itemize}
    \item Finally, $\cA$ applies measurement $M$ to $\ket{\psi_{D}^\RO}$ and outputs the measurement outcome.
\end{itemize}

Next, we define the query weight $w_{i,x}^{j}(\RO)$ to upper-bound the probability that we obtain query $(i,x)$ on at least one query register if we measure $\cA^\RO$'s $j$-th query layer.
\begin{definition}[Query Weight]
For each possible classical query $(i,x) \in [n] \times \Sigma$ and each query layer $j \in [D]$, let $w_{i,x}^{j}(\RO)$ be the \textbf{query weight} that $\cA^\RO$ gives to $(i,x)$ on the $j$-th query layer.
\[w_{i,x}^{j}(\RO) = \sum_{k \in [W]} \Tr\left[\left(\ketbra{i,x}_{R_{Q_k}} \otimes \mathbb{I}_{R \backslash R_{Q_k}}\right) \ketbra{\psi_{j-1}^\RO}\right]\]
Also, let $w_{i, x}^0(\RO) = 0$ for all $(i, x, \RO)$. Finally, let $w_{i,x}^{\leq j}(\RO)$ be the \textbf{cumulative query weight} that $\cA^\RO$ gives to $(i,x)$ on the first $j$ query layers.
\[w_{i,x}^{\leq j}(\RO) = \sum_{j' = 0}^j w_{i,x}^{j'}(\RO)\]
\end{definition}

The following lemma says that the query weight must be $\geq 0$ because the query weight is the sum of probabilities.
\begin{lemma}
    For any $\cA, i,x, \RO, j$,
    \[0 \leq w_{i,x}^{j}(\RO)\]
\end{lemma}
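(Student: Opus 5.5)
The plan is to unfold the definition of $w_{i,x}^{j}(\RO)$ and check that each summand is a probability, hence nonnegative. I will treat the cases $j = 0$ and $j \in [D]$ separately.

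For $j = 0$ the claim holds by definition, since $w_{i,x}^0(\RO) = 0$.

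For $j \in [D]$, fix $k \in [W]$ and set $P_k = \ketbra{i,x}_{R_{Q_k}} \otimes \mathbb{I}_{R \backslash R_{Q_k}}$. This is the tensor product of two orthogonal projectors, so it is an orthogonal projector: $P_k = P_k^\dagger = P_k^2$.

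The state $\ket{\psi_{j-1}^\RO}$ is a pure state. It is produced from $\ket{\psi_0}$ by unitaries $U_1, \dots, U_{j-1}$ and the phase oracle, which is also unitary, so it is a unit vector. Using cyclicity of the trace together with $P_k = P_k^\dagger P_k$, each summand becomes
\[\Tr\left[P_k \ketbra{\psi_{j-1}^\RO}\right] = \bra{\psi_{j-1}^\RO} P_k \ket{\psi_{j-1}^\RO} = \left\| P_k \ket{\psi_{j-1}^\RO} \right\|_2^2 \geq 0.\]
Operationally, this is the probability of obtaining $(i,x)$ when measuring register $R_{Q_k}$ in the computational basis. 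Summing these nonnegative terms over $k \in [W]$ gives $w_{i,x}^{j}(\RO) \geq 0$.

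No step here is a real obstacle. The only point needing care is that $P_k$ is a projector, which justifies rewriting the trace as a squared norm. As a consequence, the cumulative weights $w_{i,x}^{\leq j}(\RO)$ are sums of nonnegative terms, so they are also nonnegative and nondecreasing in $j$.
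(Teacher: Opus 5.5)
Your proof is correct and matches the paper's argument: the paper likewise writes each summand as $\bra{\psi_{j-1}^\RO} P_k \ket{\psi_{j-1}^\RO}$ and uses that the projector is positive semidefinite, then sums over $k \in [W]$. Your separate handling of $j=0$ and your rewriting of each summand as a squared norm are small refinements of the same argument.
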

\begin{proof}
The projector $\ketbra{i,x}_{R_{Q_k}} \otimes \mathbb{I}_{R \backslash R_{Q_k}}$ is positive-semidefinite, so for any state $\ket{\psi}$,
\[\bra{\psi} \left(\ketbra{i,x}_{R_{Q_k}} \otimes \mathbb{I}_{R \backslash R_{Q_k}}\right) \ket{\psi} \geq 0\]

Next,
\begin{align*}
    w_{i,x}^{j}(\RO) &= \sum_{k \in [W]} \Tr\left[\left(\ketbra{i,x}_{R_{Q_k}} \otimes \mathbb{I}_{R \backslash R_{Q_k}}\right) \ketbra{\psi_{j-1}^\RO}\right]\\
    &= \sum_{k \in [W]} \bra{\psi_{j-1}^\RO} \left(\ketbra{i,x}_{R_{Q_k}} \otimes \mathbb{I}_{R \backslash R_{Q_k}}\right) \ket{\psi_{j-1}^\RO}\\
    &\geq 0
\end{align*}
\end{proof}

The following lemma says that the cumulative query weight cannot decrease as the number of queries increases. This is because the query weight on a particular query is always non-negative.
\begin{lemma}\label{thm:cumulative-query-weight-increases-over-time}
    For any $\cA, i,x, \RO$ and any two $j, j' \in \{0\} \cup [D]$ for which $j \leq j'$,
    \[w_{i,x}^{\leq j}(\RO) \leq w_{i,x}^{\leq j'}(\RO)\]
\end{lemma}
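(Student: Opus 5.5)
The plan is to write the difference of the two cumulative weights as a sum of single-layer weights, and use the preceding lemma that each single-layer weight is nonnegative. By the definition of cumulative query weight, for $j \leq j'$ we have
\[
w_{i,x}^{\leq j'}(\RO) = \sum_{j''=0}^{j'} w_{i,x}^{j''}(\RO) = w_{i,x}^{\leq j}(\RO) + \sum_{j''=j+1}^{j'} w_{i,x}^{j''}(\RO).
\]
The first step is to check that this split is legitimate, i.e.\ that both $j$ and $j'$ lie in $\{0\} \cup [D]$. If $j = j'$, the trailing sum is empty and the claim is equality.

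The second step is to bound the trailing sum. Every index $j''$ in it satisfies $1 \leq j'' \leq D$, so the preceding lemma gives $w_{i,x}^{j''}(\RO) \geq 0$ for each term. (If $j = 0$, the term $w_{i,x}^0(\RO)$ is not in the trailing sum but in $w_{i,x}^{\leq 0}(\RO)$, and it equals $0$ by definition.) Hence the trailing sum is nonnegative, and $w_{i,x}^{\leq j}(\RO) \leq w_{i,x}^{\leq j'}(\RO)$.

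No step here is a real obstacle. The only point to be careful about is the indexing: the nonnegativity lemma is stated for layers $j \in [D]$, while layer $0$ is covered separately by the convention $w_{i,x}^0(\RO) = 0$.
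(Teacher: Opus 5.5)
Your proposal is correct and takes the same approach as the paper: handle $j = j'$ as equality, and otherwise write $w_{i,x}^{\leq j'}(\RO) - w_{i,x}^{\leq j}(\RO) = \sum_{j''=j+1}^{j'} w_{i,x}^{j''}(\RO)$, which is nonnegative by the preceding nonnegativity lemma. Your remark about layer $0$ being covered by the convention $w_{i,x}^0(\RO) = 0$ is a harmless extra detail that the paper does not need to spell out.
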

\begin{proof}
If $j = j'$, then $w_{i,x}^{\leq j}(\RO) = w_{i,x}^{\leq j'}(\RO)$, so the claim is true. From now on, let us consider the case where $j < j'$.
    \begin{align*}
        w_{i,x}^{\leq j'}(\RO) - w_{i,x}^{\leq j}(\RO) &= \sum_{j'' \in \{j+1, \dots, j'\}} w_{i,x}^{j''}(\RO)\\
        &\geq \sum_{j'' \in \{j+1, \dots, j'\}} 0 = 0\\
        w_{i,x}^{\leq j'}(\RO) &\geq w_{i,x}^{\leq j}(\RO)
    \end{align*}
\end{proof}

The following lemma says that the total query weight on all symbols during a given query is $W$.
\begin{lemma}\label{thm:total-query-weight-per-layer}
    For any $\cA, \RO$ and any $j \in [D]$,
    \[\sum_{(i,x) \in [n] \times \Sigma} w^{j}_{i,x}(\RO) = W\]
\end{lemma}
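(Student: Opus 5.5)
We prove this by writing out the definition of $w^{j}_{i,x}(\RO)$ and exchanging the order of summation, so that the sum over $k\in[W]$ is outermost. The claim then reduces to showing that, for each fixed query register $R_{Q_k}$, the weights over all classical queries $(i,x)\in[n]\times\Sigma$ sum to exactly $1$. Summing this over the $W$ registers gives $W$.

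For the inner sum, fix $k\in[W]$. By linearity of the trace,
\[
\sum_{(i,x)\in[n]\times\Sigma}\Tr\left[\left(\ketbra{i,x}_{R_{Q_k}}\otimes\mathbb{I}_{R\backslash R_{Q_k}}\right)\ketbra{\psi^{\RO}_{j-1}}\right]
=\Tr\left[\left(\Big(\sum_{(i,x)}\ketbra{i,x}_{R_{Q_k}}\Big)\otimes\mathbb{I}_{R\backslash R_{Q_k}}\right)\ketbra{\psi^{\RO}_{j-1}}\right].
\]
In the model of \Cref{sec:prelim-query-bounded-algorithms}, each query register $R_{Q_k}$ is submitted to the phase oracle $\RO$, so its computational basis is indexed by the classical queries $\{\ket{i,x}\}_{(i,x)\in[n]\times\Sigma}$. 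Hence the inner sum is the completeness relation $\sum_{(i,x)}\ketbra{i,x}_{R_{Q_k}}=\mathbb{I}_{R_{Q_k}}$. The trace therefore becomes $\Tr[\ketbra{\psi^{\RO}_{j-1}}]=\braket{\psi^{\RO}_{j-1}}=1$, because $\ket{\psi^{\RO}_{j-1}}$ is obtained from the normalized state $\ket{\psi_0}$ by unitaries $U_1,\dots,U_{j-1}$ and phase-oracle calls, all of which preserve norm.

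The only point that needs care is the modeling convention in the completeness step. If query registers could hold values outside $[n]\times\Sigma$, for example a phase-bit or ancilla component, we would instead sum over the $(i,x)$ part of the register's basis while tracing out the rest. The identity still holds as long as the input part of $R_{Q_k}$ has exactly $\{\ket{i,x}\}$ as an orthonormal basis, which is the convention this definition uses. Apart from this, the argument is a direct computation.
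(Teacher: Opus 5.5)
Your proposal is correct and follows essentially the same route as the paper. The paper also swaps the sums over $(i,x)$ and $k$, applies the completeness relation $\sum_{(i,x)}\ketbra{i,x}_{R_{Q_k}}=\mathbb{I}_{R_{Q_k}}$, and uses $\Tr\bigl[\ketbra{\psi^{\RO}_{j-1}}\bigr]=1$ to get $W$. Your added remarks on norm preservation and on the register's basis convention only make explicit what the paper takes as given.
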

\begin{proof}
    \begin{align*}
        \sum_{(i,x) \in [n] \times \Sigma} w^{j}_{i,x}(\RO) &= \sum_{(i,x) \in [n] \times \Sigma} \sum_{k \in [W]} \Tr\left[\left(\ketbra{i,x}_{R_{Q_k}} \otimes \mathbb{I}_{R \backslash R_{Q_k}}\right) \ketbra{\psi_{j-1}^\RO}\right]\\
        &= \sum_{k \in [W]} \Tr\left[\left(\sum_{(i,x) \in [n] \times \Sigma} \ketbra{i,x}_{R_{Q_k}} \otimes \mathbb{I}_{R \backslash R_{Q_k}}\right) \ketbra{\psi_{j-1}^\RO}\right]\\
        &= \sum_{k \in [W]} \Tr\left[\left(\mathbb{I}_{R_{Q_k}} \otimes \mathbb{I}_{R \backslash R_{Q_k}}\right) \ketbra{\psi_{j-1}^\RO}\right]\\
        &= \sum_{k \in [W]}\Tr\left[\ketbra{\psi_{j-1}^\RO}\right] = \sum_{k \in [W]}1\\
        &= W
    \end{align*}
\end{proof}

Likewise, the total cumulative query weight given to all symbols after $j$ queries is $j \cdot W$.

\begin{lemma}\label{thm:total-cumulative-query-weight}
    For any $\cA, \RO$ and any $j \in [D]$,
    \[\sum_{(i,x) \in [n] \times \Sigma} w^{\leq j}_{i,x}(\RO) = j \cdot W\]
\end{lemma}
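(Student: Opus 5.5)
The plan is to reduce the statement directly to \cref{thm:total-query-weight-per-layer} by expanding the definition of cumulative query weight and exchanging the order of summation. For fixed $\cA$, $\RO$ and $j \in [D]$, substitute $w^{\leq j}_{i,x}(\RO) = \sum_{j'=0}^{j} w^{j'}_{i,x}(\RO)$ into the left-hand side. This gives a double sum over $(i,x) \in [n]\times\Sigma$ and $j' \in \{0,\dots,j\}$. Both index sets are finite, since $\Sigma = \bbF_q^m$ is finite, so the two sums can be swapped freely.

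After swapping, the expression is $\sum_{j'=0}^{j} \sum_{(i,x)} w^{j'}_{i,x}(\RO)$. I would split off the $j'=0$ term. By definition $w^0_{i,x}(\RO)=0$ for every $(i,x,\RO)$, so that term contributes $0$. For each $j' \in [j] \subseteq [D]$, \cref{thm:total-query-weight-per-layer} shows that the inner sum equals exactly $W$. The total is therefore $0 + \sum_{j'=1}^{j} W = j\cdot W$, as claimed.

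I do not expect a genuine obstacle. The one point needing care is that the $j'=0$ layer is a convention with weight identically zero rather than an actual query layer, so \cref{thm:total-query-weight-per-layer}, which is stated only for $j\in[D]$, must not be applied to it. The range of $j'$ has to be split accordingly. One could also record the trivial extension to $j=0$, where both sides are $0$, but the statement only asks for $j\in[D]$.
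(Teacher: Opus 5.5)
Your proposal is correct and takes the same approach as the paper's proof. You expand $w^{\leq j}_{i,x}(\RO)$ as $\sum_{j'=0}^{j} w^{j'}_{i,x}(\RO)$, drop the identically zero $j'=0$ term, exchange the finite sums, and apply \cref{thm:total-query-weight-per-layer} to each of the $j$ layers to obtain $j \cdot W$.
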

\begin{proof}
    \begin{align*}
        \sum_{(i,x) \in [n] \times \Sigma} w^{\leq j}_{i,x}(\RO) &= \sum_{(i,x) \in [n] \times \Sigma} \sum_{j' = 0}^j w_{i,x}^{j'}(\RO) = \sum_{(i,x) \in [n] \times \Sigma} \sum_{j' = 1}^j w_{i,x}^{j'}(\RO)\\
        &= \sum_{j' = 1}^j \sum_{(i,x) \in [n] \times \Sigma} w_{i,x}^{j'}(\RO)\\
        &= \sum_{j' = 1}^j W\\
        &= j \cdot W
    \end{align*}
\end{proof}

The swapping lemma (\cref{thm:swapping-lemma}) says that the adversary's states on two oracles $\RO$ and $\RO'$ will be close in Euclidean distance if the adversary gives small query weight to the positions where the oracles differ.
 \begin{lemma}[Swapping Lemma, Adapted from \cite{Vaz98} Lemma 2,
\cite{BBBV97} Theorem 3.3]\label{thm:swapping-lemma}
        Given two oracles $\RO, \RO'$, let $X \subseteq [n] \times \Sigma$ be the subset of inputs on which $\RO$ and $\RO'$ differ. Then, for any $j \in \{0, \ldots, D\}$,
        \[\left\|\ket{\psi^\RO_{j}} - \ket{\psi^{\RO'}_{j}}\right\|_2 \leq 2 \cdot \sqrt{j \cdot \sum_{(i,x) \in X} w_{i,x}^{\leq j}(\RO)}\]
    \end{lemma}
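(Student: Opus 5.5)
The plan is the standard hybrid argument of \cite{BBBV97}, adapted to parallel phase queries. For each $j' \in \{0,\dots,j\}$ let $\ket{\phi_{j'}}$ be the state after $j'$ layers in which the first $j'$ layers are answered by $\RO'$. Write $O_\RO$ for the parallel phase oracle acting on all $W$ query registers, and define hybrids $\ket{\chi_{r}} = U_j O_{\RO'} \cdots U_{r+1} O_{\RO'} \ket{\psi^\RO_{r}}$ for $r = 0,\dots,j$. In this hybrid the first $r$ layers use $\RO$ and the remaining layers use $\RO'$. Then $\ket{\chi_j} = \ket{\psi^\RO_j}$ and $\ket{\chi_0} = \ket{\psi^{\RO'}_j}$, since both runs start from the same oracle-independent $\ket{\psi_0}$. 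By the triangle inequality and unitary invariance,
\[
\left\|\ket{\psi^\RO_j} - \ket{\psi^{\RO'}_j}\right\|_2 \leq \sum_{r=1}^{j} \left\|(O_\RO - O_{\RO'})\ket{\psi^\RO_{r-1}}\right\|_2 .
\]

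The per-layer step is to bound each term by $2\sqrt{w^{r}(X)}$, where $w^r(X) = \sum_{(i,x)\in X} w^{r}_{i,x}(\RO)$. Expand $\ket{\psi^\RO_{r-1}}$ in the computational basis of the $W$ query registers. On a basis string $(q_1,\dots,q_W)$ the two parallel phase oracles multiply by $(-1)^{\sum_k \RO(q_k)}$ and $(-1)^{\sum_k \RO'(q_k)}$. These phases agree unless some $q_k \in X$, and otherwise differ by a factor of absolute value at most $2$. So the squared norm is at most $4\Pr[\exists k: q_k \in X]$, with the probability taken when measuring the query registers. By a union bound over $k$ and over $(i,x)\in X$, this probability is at most $\sum_k \sum_{(i,x)\in X}\Tr[(\ketbra{i,x}_{R_{Q_k}}\otimes\mathbb{I})\ketbra{\psi^\RO_{r-1}}] = w^r(X)$. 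Hence the $r$-th term is at most $2\sqrt{w^r(X)}$.

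To finish, apply Cauchy–Schwarz:
\[
\sum_{r=1}^j 2\sqrt{w^r(X)} \leq 2\sqrt{j\sum_{r=1}^j w^r(X)} = 2\sqrt{j\sum_{(i,x)\in X} w^{\leq j}_{i,x}(\RO)},
\]
using $w^0 = 0$. The case $j=0$ is trivial, since both sides are $0$.

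The main point needing care is the hybrid ordering. Each term must be evaluated on the $\RO$-run states $\ket{\psi^\RO_{r-1}}$, because the lemma's weights are with respect to $\RO$ and not $\RO'$. This means the hybrids must run $\RO$ first and switch to $\RO'$ afterwards, as done above. The other detail to verify is the parallel-query bound: a string whose query registers hit $X$ several times still contributes a phase difference of norm at most $2$. The union bound over registers then absorbs the multiplicity, which is why $w$ is defined as a sum over $k$.
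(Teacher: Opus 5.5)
Your proof is correct and follows the paper's argument. Both use the same BBBV-style hybrid, with the error terms $(O_\RO - O_{\RO'})\ket{\psi^\RO_{r-1}}$ evaluated on the $\RO$-run states, a per-layer bound of $4$ times the probability of hitting $X$ via a union bound over registers, and Cauchy--Schwarz. The only difference is organizational: you handle the $W$ parallel registers directly, whereas the paper first proves the $W=1$ case and then reduces to it by viewing each layer as a single query to $\RO^W$, whose differing inputs lie in $X^{\lor W}$.
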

    \begin{proof}
        First, let us consider the case where $j = 0$. Then 
        \[\left\|\ket{\psi^\RO_{j}} - \ket{\psi^{\RO'}_{j}}\right\|_2 = \left\|\ket{\psi_{0}} - \ket{\psi_{0}}\right\|_2 = 0 = 2 \cdot \sqrt{j \cdot \sum_{(i,x) \in X} w_{i,x}^{\leq j}(\RO)}\]
        From now on, we will consider the case where $j \geq 1$.
        
        Second, let us prove this claim for algorithms making one query per layer ($W = 1$). Our proof is adapted from \cite[Theorem~3.3]{BBBV97}. Let $O^\RO$ and $O^{\RO'}$ be unitaries that respond to a given query using oracle $\RO$ or $\RO'$ respectively. Next, for each $j' \in [j]$, let
        \begin{align*}
            \ket{E_{j'}} &= U_{j'} \cdot O^{\RO'} \cdot \ket{\psi_{j'-1}^\RO} - U_{j'} \cdot O^\RO \cdot \ket{\psi_{j'-1}^\RO}\\
            &= U_{j'} \cdot \left(O^{\RO'} - O^\RO\right) \cdot \ket{\psi_{j'-1}^\RO}
        \end{align*}
        Note that when $W = 1$,
        \begin{align*}
            O^{\RO'} - O^\RO &= \sum_{(i,x) \in [n] \times \Sigma} \ketbra{i,x}_{R_Q} \otimes \mathbb{I}_{R_A} \cdot (-1)^{\RO'(i,x)}\\
            &\quad\quad- \sum_{(i,x) \in [n] \times \Sigma} \ketbra{i,x}_{R_Q} \otimes \mathbb{I}_{R_A} \cdot (-1)^{\RO(i,x)}\\
            &= \sum_{(i,x) \in [n] \times \Sigma} \ketbra{i,x}_{R_Q} \otimes \mathbb{I}_{R_A} \cdot (-1)^{\RO'(i,x)} \cdot \left[1 - (-1)^{\RO(i,x) - \RO'(i,x)}\right]\\
            &= \sum_{(i,x) \in X} \ketbra{i,x}_{R_Q} \otimes \mathbb{I}_{R_A} \cdot (-1)^{\RO'(i,x)} \cdot 2\\
            \left(O^{\RO'} - O^\RO\right)^\dag \cdot \left(O^{\RO'} - O^\RO\right) &= 4 \cdot \sum_{(i,x) \in X} \ketbra{i,x}_{R_Q} \otimes \mathbb{I}_{R_A}\\
        \end{align*}
        Therefore,
        \begin{align*}
            \left\|\ket{E_{j'}}\right\|_2^2 &= \left\|\left(O^{\RO'} - O^\RO\right) \cdot \ket{\psi_{j'-1}^\RO}\right\|_2^2\\
            &= \bra{\psi_{j'-1}^\RO} \cdot \left(4 \cdot \sum_{(i,x) \in X} \ketbra{i,x}_{R_Q} \otimes \mathbb{I}_{R_A}\right) \cdot \ket{\psi_{j'-1}^\RO}\\
            &= 4 \cdot \sum_{(i,x) \in X} \Tr\left[\ketbra{i,x}_{R_Q} \otimes \mathbb{I}_{R_A} \cdot \ketbra{\psi_{j'-1}^\RO}\right]\\
            &= 4 \cdot \sum_{(i,x) \in X} w_{i,x}^{j'}(\RO)
        \end{align*}
        Next,
        \begin{align*}
            \ket{\psi^\RO_{j}} &= U_{j} \cdot O^\RO \cdot \ket{\psi_{j-1}^\RO}\\
            &= U_{j} \cdot O^{\RO'} \cdot \ket{\psi_{j-1}^\RO} - \ket{E_{j}}\\
            &= \left(\prod_{j' = 1}^j U_{j'} \cdot O^{\RO'}\right) \cdot \ket{\psi_0} - \sum_{j' = 1}^j \left(\prod_{j'' = j'+1}^j U_{j''} \cdot O^{\RO'}\right) \cdot \ket{E_{j'}}\\
            &= \ket{\psi^{\RO'}_j} - \sum_{j' = 1}^j \left(\prod_{j'' = j'+1}^j U_{j''} \cdot O^{\RO'}\right) \cdot \ket{E_{j'}}\\
            \left\|\ket{\psi^\RO_{j}} - \ket{\psi^{\RO'}_j}\right\|_2 &= \left\|\sum_{j' = 1}^j\left(\prod_{j'' = j'+1}^j U_{j''} \cdot O^{\RO'}\right) \cdot \ket{E_{j'}}\right\|_2\\
            &\leq \sum_{j' = 1}^j \left\|\left(\prod_{j'' = j'+1}^j U_{j''} \cdot O^{\RO'}\right) \cdot \ket{E_{j'}}\right\|_2\\
            &\leq \sqrt{j \cdot \sum_{j' = 1}^j \left\|\left(\prod_{j'' = j'+1}^j U_{j''} \cdot O^{\RO'}\right) \cdot \ket{E_{j'}}\right\|_2^2}\\
            &= \sqrt{j \cdot \sum_{j' = 1}^j \left\|\ket{E_{j'}}\right\|_2^2}\\
            &= \sqrt{j \cdot \sum_{j' = 1}^j 4 \cdot \sum_{(i,x) \in X} w_{i,x}^{j'}(\RO)}\\
            &= 2 \cdot \sqrt{j \cdot \sum_{(i,x) \in X} w_{i,x}^{\leq j}(\RO)}
        \end{align*}
        Third, we will reduce the case where $W > 1$ to the case where $W=1$. We can view each query layer as making one query to the oracle $\RO^W$ that applies $\RO$ in parallel $W$ times.
        
        Next, the inputs $\bfz = [(i_1, x_1), \dots, (i_W, x_W)]$ on which $\RO^W$ and $(\RO')^W$ differ satisfy 
        \[(i_1, x_1) \in X \lor \ldots \lor (i_W, x_W) \in X\] 
        Let $X^{\lor W}$ be the set of all inputs satisfying the condition above.

        For any query layer $j'$, let us compute $\sum_{\bfz \in X^{\lor W}} w_{\bfz}^{j'}(\RO^W)$, the query weight given to all inputs in $X^{\lor W}$ during the $j'$-th query to $\RO^W$. This equals the probability that we obtain a value in $X^{\lor W}$ if we measure the $j'$-th query.
        \begin{align*}
            \sum_{\bfz \in X^{\lor W}} w_{\bfz}^{j'}(\RO^W) &= \Pr\left[\text{On the $j'$-th query to $\RO^W$, $R_{Q_1} \times \ldots \times R_{Q_W}$ contains a value in $X^{\lor W}$}\right]\\
            &\leq \sum_{k \in [W]}\Pr\left[\text{On the $j'$-th query to $\RO^W$, $R_{Q_k}$ contains a value in $X$}\right]\\
            &= \sum_{k \in [W]} \sum_{(i, x) \in X} \Tr\left[\left(\ketbra{i,x}_{R_{Q_k}} \otimes \mathbb{I}_{R \backslash R_{Q_k}}\right) \ketbra{\psi_{j'-1}^\RO}\right]\\
            &= \sum_{(i, x) \in X} \sum_{k \in [W]} \Tr\left[\left(\ketbra{i,x}_{R_{Q_k}} \otimes \mathbb{I}_{R \backslash R_{Q_k}}\right) \ketbra{\psi_{j'-1}^\RO}\right]\\
            &= \sum_{(i, x) \in X} w_{i,x}^{j'}(\RO)
        \end{align*}
        The second line follows from the union bound.

        Next, this implies that the cumulative query weight after $j$ queries obeys a similar relation:
        \begin{align*}
            \sum_{\bfz \in X^{\lor W}} w_{\bfz}^{\leq j}(\RO^W) &= \sum_{\bfz \in X^{\lor W}} \sum_{j' = 0}^j w_{\bfz}^{j'}(\RO^W) = \sum_{j' = 0}^j \sum_{\bfz \in X^{\lor W}}  w_{\bfz}^{j'}(\RO^W)\\
            &\leq \sum_{j' = 0}^j \sum_{(i, x) \in X} w_{i,x}^{j'}(\RO) = \sum_{(i, x) \in X} \sum_{j' = 0}^j w_{i,x}^{j'}(\RO)\\
            &= \sum_{(i, x) \in X} w_{i,x}^{\leq j}(\RO)
        \end{align*}

        Finally, we apply the swapping lemma to $\RO^W$ and $(\RO')^W$. This is allowed because the adversary makes queries of width $1$ to the oracles $\RO^W$ or $(\RO')^W$. Then:
        \begin{align*}
            \left\|\ket{\psi^\RO_{j}} - \ket{\psi^{\RO'}_{j}}\right\|_2 &= \left\|\ket{\psi^{\RO^W}_{j}} - \ket{\psi^{(\RO')^W}_{j}}\right\|_2\\
            &\leq 2 \cdot \sqrt{j \cdot \sum_{\bfz \in X^{\lor W}} w_{\bfz}^{\leq j}(\RO^W)}\\
            &\leq 2 \cdot \sqrt{j \cdot \sum_{(i, x) \in X} w_{i,x}^{\leq j}(\RO)}
        \end{align*}
    \end{proof}

\ifSubmission
\else
\subsection{Inequalities}

\begin{lemma}[Triangle Inequality]\label{thm:triangle-inequality}
    For any $a, b \in \bbC$,
    \begin{align*}
        \abs{a + b} \leq \abs{a} + \abs{b}\\
        \abs{a - b} \geq \abs{a} - \abs{b}
    \end{align*}
\end{lemma}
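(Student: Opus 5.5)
The statement is the elementary triangle inequality for complex numbers, so I will argue directly from the definition of the modulus rather than from any earlier result in the paper. For the first inequality, $\abs{a+b} \leq \abs{a} + \abs{b}$, both sides are nonnegative, so it is equivalent to compare squares. I will expand
\[
\abs{a+b}^2 = (a+b)\overline{(a+b)} = \abs{a}^2 + \abs{b}^2 + 2\,\mathrm{Re}(a\bar b),
\]
and bound the cross term using $\mathrm{Re}(a\bar b) \leq \abs{a\bar b} = \abs{a}\abs{b}$. This bound holds because, for any complex $z$, we have $\abs{\mathrm{Re}\, z} \leq \sqrt{(\mathrm{Re}\, z)^2 + (\mathrm{Im}\, z)^2} = \abs{z}$, and the modulus is multiplicative. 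Substituting gives $\abs{a+b}^2 \leq (\abs{a}+\abs{b})^2$, and taking nonnegative square roots yields the first claim.

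For the second inequality, $\abs{a-b} \geq \abs{a} - \abs{b}$, I will apply the first inequality to the decomposition $a = (a-b) + b$. This gives $\abs{a} \leq \abs{a-b} + \abs{b}$, and rearranging gives the claim. No case split is needed, because the claimed bound is automatically true whenever its right-hand side is negative.

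None of these steps is a genuine obstacle. The only point needing any care is the cross-term bound $\mathrm{Re}(a\bar b) \leq \abs{a}\abs{b}$, which I will justify in one line as above.
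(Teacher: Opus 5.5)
Your proof is correct and essentially the same as the paper's. For the second inequality it is identical (write $a = (a-b) + b$ and rearrange $\abs{a} \leq \abs{a-b} + \abs{b}$), and the only difference is that the paper takes $\abs{a+b} \leq \abs{a} + \abs{b}$ as given, whereas you also give its standard proof via $\abs{a+b}^2 = \abs{a}^2 + \abs{b}^2 + 2\,\mathrm{Re}(a\bar b)$ and $\mathrm{Re}(a\bar b) \leq \abs{a}\abs{b}$.
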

\begin{proof}
    We take it as a given that $\abs{a + b} \leq \abs{a} + \abs{b}$ for any $a,b \in \bbC$.
    
    Next, let $c = a-b$. Then $c \in \bbC$, and $c + b = a$. Next, 
    \begin{align*}
        \abs{c+b} &\leq \abs{c} + \abs{b}\\
        \abs{a} &\leq \abs{a-b} + \abs{b}\\
        \abs{a} - \abs{b} &\leq \abs{a-b}
    \end{align*}
\end{proof}

\begin{lemma}[AM-GM Inequality \cite{O18}]\label{thm:am-gm-inequality}
    For any $a, b, f \in \bbR$ such that $f > 0$,
    \[2 a b \leq f \cdot a^2 + \frac{1}{f} \cdot b^2\]
\end{lemma}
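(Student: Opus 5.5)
The plan is to derive the inequality from the nonnegativity of a single real square. Since $f > 0$, the real number $\sqrt{f}$ is well defined and strictly positive, so $1/\sqrt{f}$ is also a well-defined real number. For the given reals $a, b$, consider the real number
\[
u = \sqrt{f}\cdot a - \frac{1}{\sqrt{f}}\cdot b .
\]
Its square satisfies $u^2 \geq 0$.

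The single step is to expand $u^2$. The cross term is $-2\cdot\sqrt{f}\cdot\frac{1}{\sqrt{f}}\cdot ab = -2ab$, which gives
\[
0 \leq u^2 = f\cdot a^2 - 2ab + \frac{1}{f}\cdot b^2 .
\]
Adding $2ab$ to both sides yields $2ab \leq f\cdot a^2 + \frac{1}{f}\cdot b^2$, which is the claim.

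The only point needing care is the hypothesis $f > 0$. It guarantees that $\sqrt{f}$ is real and nonzero, so the cancellation $\sqrt{f}\cdot\frac{1}{\sqrt{f}} = 1$ in the cross term is legitimate. It also guarantees that $u$ is real, so $u^2 \geq 0$ holds. Without this hypothesis the inequality can fail. For example, with $f < 0$ and $a = b = 1$, the right-hand side $f + 1/f$ is negative while the left-hand side is $2$.

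I expect no real obstacle beyond this bookkeeping.
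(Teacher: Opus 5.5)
Your proof is correct and is the same as the paper's: both expand $\bigl(\sqrt{f}\,a - \frac{1}{\sqrt{f}}\,b\bigr)^2 \geq 0$ and rearrange to get $2ab \leq f a^2 + \frac{1}{f} b^2$. Your remark on where $f > 0$ is used, and the counterexample for $f < 0$, are correct extras that the paper does not include.
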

\begin{proof}
    $\sqrt{f} \cdot a - \frac{1}{\sqrt{f}} \cdot b$ is real, so
    \begin{align*}
        0 &\leq \left(\sqrt{f} \cdot a - \frac{1}{\sqrt{f}} \cdot b\right)^2\\
        &\leq f \cdot a^2 + \frac{1}{f} \cdot b^2 - 2 \cdot \sqrt{f} \cdot \frac{1}{\sqrt{f}} \cdot a b\\
        &\leq f \cdot a^2 + \frac{1}{f} \cdot b^2 - 2 a b\\
        2 a b &\leq f \cdot a^2 + \frac{1}{f} \cdot b^2
    \end{align*}
\end{proof}
\fi

The following lemma says that the Euclidean distance between two states upper-bounds the trace distance.
    \begin{lemma}\label{thm:trace-dist-euclidian-dist}
        For any quantum pure states $\ket{\psi}$ and $\ket{\phi}$, \[\mathsf{TraceDist}\left(\ket{\psi}, \ket{\phi}\right) \leq \left\|\ket{\psi} - \ket{\phi}\right\|_2\]
    \end{lemma}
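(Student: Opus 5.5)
The plan is to compute both sides of the inequality explicitly in terms of the overlap $\langle\psi|\phi\rangle$ and then compare them.

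\emph{Step 1: trace distance.} For pure states the trace distance has the closed form
\[\mathsf{TraceDist}(\ket{\psi},\ket{\phi}) = \tfrac{1}{2}\left\|\ketbra{\psi}-\ketbra{\phi}\right\|_1 = \sqrt{1-\abs{\langle\psi|\phi\rangle}^2}.\]
To derive it, note that the operator $\ketbra{\psi}-\ketbra{\phi}$ is Hermitian, traceless, and has rank at most $2$, since it is supported on $\mathrm{span}\{\ket{\psi},\ket{\phi}\}$. Its two eigenvalues are therefore $\pm\mu$ for some $\mu \geq 0$. The sum of their squares equals the squared Hilbert--Schmidt norm:
\[2\mu^2 = \Tr\left[(\ketbra{\psi}-\ketbra{\phi})^2\right] = 2 - 2\abs{\langle\psi|\phi\rangle}^2.\]
Hence $\mu = \sqrt{1-\abs{\langle\psi|\phi\rangle}^2}$, and the trace distance, which is half the sum of the absolute eigenvalues, equals $\mu$.

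\emph{Step 2: Euclidean distance.} Expanding the norm directly gives
\[\left\|\ket{\psi}-\ket{\phi}\right\|_2^2 = 2 - 2\,\mathrm{Re}\,\langle\psi|\phi\rangle.\]

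\emph{Step 3: comparison.} Write $a = \abs{\langle\psi|\phi\rangle} \in [0,1]$. Since $\mathrm{Re}\,\langle\psi|\phi\rangle \leq a$, Step 2 gives $\left\|\ket{\psi}-\ket{\phi}\right\|_2^2 \geq 2-2a$. It therefore suffices to show $1-a^2 \leq 2-2a$. This is equivalent to $0 \leq 1 - 2a + a^2 = (1-a)^2$, which always holds. Taking square roots of both (nonnegative) sides completes the proof.

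The only step needing real care is Step 1, the eigenvalue computation for the rank-two difference of projectors. Everything after it is routine algebra.
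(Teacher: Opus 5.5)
Your proof is correct and follows the paper's route. You expand $\left\|\ket{\psi}-\ket{\phi}\right\|_2^2 = 2-2\,\mathrm{Re}\langle\psi|\phi\rangle$, use $\mathsf{TraceDist}^2 = 1-\abs{\langle\psi|\phi\rangle}^2$, and finish with $(1-a)^2 \geq 0$. The differences are cosmetic: you derive the pure-state trace-distance formula from the rank-two eigenvalue argument where the paper takes it as known, and you bound $\mathrm{Re}\langle\psi|\phi\rangle$ directly by $\abs{\langle\psi|\phi\rangle}$ where the paper passes through $\abs{\mathrm{Re}\langle\phi|\psi\rangle}$.
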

    \begin{proof}
    \begin{align*}
        \left\|\ket{\psi} - \ket{\phi}\right\|_2^2 &= 
        \left(\bra{\psi} - \bra{\phi}\right)\left(\ket{\psi} - \ket{\phi}\right)\\
        &= \braket{\psi}{\psi} + \braket{\phi}{\phi} - \braket{\phi}{\psi} - \braket{\psi}{\phi}\\
        &= 2 - 2 \cdot \mathsf{Re}\left(\braket{\phi}{\psi}\right)\\
        &\geq 2 - 2 \cdot \abs{\mathsf{Re}\left(\braket{\phi}{\psi}\right)}
        \end{align*}
        
        Next, we will show that $\mathsf{TraceDist}\left(\ket{\psi}, \ket{\phi}\right)^2 \leq 2 - 2 \cdot \abs{\mathsf{Re}\left(\braket{\phi}{\psi}\right)}$.
        \begin{align*}
            \abs{\mathsf{Re}\left(\braket{\phi}{\psi}\right)} &\leq 1\\
            0 &\leq 1 - \abs{\mathsf{Re}\left(\braket{\phi}{\psi}\right)}\\
            0 &\leq \left(1 - \abs{\mathsf{Re}\left(\braket{\phi}{\psi}\right)}\right)^2\\
            &= 1 + \abs{\mathsf{Re}\left(\braket{\phi}{\psi}\right)}^2 - 2 \cdot \abs{\mathsf{Re}\left(\braket{\phi}{\psi}\right)}\\
            1 - \abs{\mathsf{Re}\left(\braket{\phi}{\psi}\right)}^2 &\leq 2 - 2 \cdot \abs{\mathsf{Re}\left(\braket{\phi}{\psi}\right)}\\
            1 - \abs{\braket{\phi}{\psi}}^2 &\leq \\
            \mathsf{TraceDist}\left(\ket{\psi}, \ket{\phi}\right)^2 &= 
        \end{align*}
        Then
        \begin{align*}
            \mathsf{TraceDist}\left(\ket{\psi}, \ket{\phi}\right)^2 &\leq 2 - 2 \cdot \abs{\mathsf{Re}\left(\braket{\phi}{\psi}\right)}\\
            &\leq \left\|\ket{\psi} - \ket{\phi}\right\|_2^2\\
            \mathsf{TraceDist}\left(\ket{\psi}, \ket{\phi}\right) &\leq \left\|\ket{\psi} - \ket{\phi}\right\|_2
    \end{align*}
\end{proof}
\section{Ruling Out $o(\log \secp)$-Depth Strategies}\label{sec:main}
Our main result (\cref{thm:CR-for-depth-bounded-adversaries}) says that the construction in \cref{sec:prelim-YZ-protocol} provides $o(\secp^{c/2})$ bits of min-entropy (for some constant $c$) as long as the adversary is limited to query depth $o(\log \secp)$.

\begin{theorem}[Certifiable Min-Entropy]\label{thm:CR-for-depth-bounded-adversaries}
    The candidate construction of a certifiable min-entropy protocol given in \cref{sec:prelim-YZ-protocol} satisfies $(D, \minent)$-certifiable min-entropy (\cref{def:certifiable-min-entropy}) for any $D(\secp) = o(\log \secp)$ and any $\minent(\secp) = o(\secp^{c/2})$, where $c$ is a constant parameter of the list-recoverable code.
\end{theorem}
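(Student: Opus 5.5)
We argue by contradiction, following the two-step structure of the technical overview. Fix $D = o(\log\secp)$, a polynomial width $W$ (so $Q = DW$), an inverse polynomial $\delta$, and $\minent = o(\secp^{c/2})$. Suppose that with non-negligible probability $\epsilon$ over $h \gets \cH$, the adversary $\cA^h$ is accepted with probability $\geq \delta$ and $\Minent(\cA^h_\top) \leq \minent$. For such $h$ there is a codeword $\bfx$ with $\Pr[\cA^h \to \bfx \wedge \bfx \text{ correct}] \geq \delta \cdot 2^{-\minent} =: p$. Sample $X \gets \cA^h$; conditioned on a bad $h$, the output $X$ equals such an $\bfx$ with probability $\geq p$. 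Hence the pair $(H,X)$ lands, with probability $\geq \epsilon p$, in the set of pairs where $\bfx$ is correct under $h$ and $\Pr[\cA^h\to\bfx]\geq p$. Since $p \geq 2^{-o(\secp^{c/2})}$, it suffices to show that this event has probability smaller than $2^{-\Omega(\secp^{c/2})}$ times a negligible function.

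\textbf{Step 1 (heavy query lemma).} Define ``light'' positions of $\bfx$ as those with $w^{\leq D}_{i,x_i}(h) < p^2/16Qn$. If at least $s+1$ positions are light, flip every nonempty subset of them to $1$. The swapping lemma (\cref{thm:swapping-lemma}) together with \cref{thm:trace-dist-euclidian-dist} gives $\Pr[\cA^{h'}\to\bfx]\geq p/2$ on each of the $\geq 2^s$ resulting oracles. These oracle families are disjoint, since $h$ is recovered by resetting $\bfx$ to $0$. The weighted counting argument then bounds the probability of such pairs by $2^{-(s-1)}$. We choose $s$ slightly below $\zeta n$ but with $2^{-s} \ll \epsilon p$. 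This is possible because $n = \Theta(\secp)$ while $\log(1/p) = o(\secp^{c/2})$. Consequently, with all but negligible loss, the output pair has at least $n-s \geq (1-\zeta/2)n$ heavy positions.

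\textbf{Step 2 (low-depth adversaries cannot heavily query a correct codeword).} Define thresholds $t_q = (2Qn/\ell)^{4^{-q}}$ and check two facts. First, $t_D \leq p^2/16Qn$. This uses $\ell = 2^{\secp^c}$ and $4^{D} = \secp^{o(1)}$, so that $t_D = 2^{-\secp^{c-o(1)}}$, which is below $p^2/16Qn$ because $\log(1/p) = o(\secp^{c/2})$; this is where $\minent = o(\secp^{c/2})$ enters. Second, the recurrence $t_q \geq Q/\ell + 2Q\sqrt{Dn\,t_{q-1}}$ holds, since $t_{q-1}^{1/2} = t_q^{2}$ and $t_q$ is tiny.

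For a pair heavy at level $t_D$, let $q^*$ be the first layer at which $(1-\zeta)n$ positions have weight $\geq t_{q^*}$. Then $q^* \geq 1$, because all weight is $0$ at layer $0$. Before $q^*$, at least $\zeta n$ positions have weight $< t_{q^*-1}$ (using monotonicity, \cref{thm:cumulative-query-weight-increases-over-time}). Flip subsets of these positions to obtain $2^{\lfloor\zeta n\rfloor}$ bad oracles. Apply the swapping lemma layer by layer up to $q^*$, bounding the change in each weight $w^{j}_{i,x}$ by $2W\|\psi^h-\psi^{h'}\|$. The result is that the heavy positions keep weight $\geq Q/\ell$ under $h'$. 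By \cref{thm:total-cumulative-query-weight}, each coordinate then has at most $\ell$ symbols of weight $\geq Q/\ell$. List recoverability (\cref{thm:list-recoverability}) therefore shows that each $h'$ is charged by at most $L$ pairs, and hence the probability of such pairs is $\leq L\cdot 2^{-\lfloor\zeta n\rfloor} = \negl$. Here the count must be over oracles weighted uniformly; to handle the sampling weight $\Pr[\cA^h\to\bfx]$ we note that every $(h,\bfx)$ in this set has $\Pr \leq 1$, so the event probability is at most $|S|/|\cH|$.

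\textbf{Combining and main obstacle.} Steps 1 and 2 together force $\epsilon p \leq \negl + 2^{-s}$, which is a contradiction. The main obstacle is Step 2's perturbation control. The delicate point is that under $h'$ the relevant weights are those at layer $q^*$, and the perturbation must be bounded using weights computed under $h$ rather than $h'$. One must also verify that the recurrence holds all the way down to $q=1$, where $t_0 = 2Qn/\ell$ is tiny.

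A second delicate point is reconciling the thresholds $\zeta n$ versus $\zeta n/2$ between the two steps. I would first try running Step 2 with $\zeta/2$ in place of $\zeta$ and applying list recoverability with $(1-\zeta)n$ heavy coordinates; if that fails, I would adjust $s$.
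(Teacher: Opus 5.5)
Your proposal is correct and follows the paper's proof essentially step for step: the heavy-query lemma via disjoint families of reprogrammed oracles, the thresholds $t_q = (2Qn/\ell)^{4^{-q}}$ with a critical layer $q^*$ and per-layer swapping-lemma perturbation bounds, and list recoverability for the charging argument. The differences are minor. The paper takes $s = \lfloor \zeta n \rfloor$ in both steps, so no $\zeta$ versus $\zeta/2$ reconciliation is needed. It also intersects the oracle-level events before multiplying by $p$, so Step 2 only needs to be negligible; your pair-level union bound instead needs Step 2's explicit $2^{-\Omega(\secp)}$ rate to beat $\epsilon p$, which your bound $L \cdot 2^{-\lfloor\zeta n\rfloor}$ already provides.
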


The rest of \cref{sec:main} is devoted to proving \cref{thm:CR-for-depth-bounded-adversaries}.\\

Let $(\RO, \bfX)$ be the random variables referring to the random oracle and $\cA$'s output respectively, and let $(\ro, \bfx)$ be generic values that they take. $(\RO, \bfX)$ are jointly sampled from the following distribution. First $\RO \getsr \cH$, and then $\bfX \gets \cA^\RO(1^\secp)$.

\subsection*{The adversary heavily queries their answer.}
First, we prove roughly that if the adversary is able to output a correct answer $\bfx$ with high probability, then they must have given non-negligible query weight to most symbols of $\bfx$. Intuitively, this is because the adversary must check that most symbols of $\bfx$ actually hash to $\mathbf{0}$; otherwise, they would output $\bfx$ even when those symbols hash to $1$.

\begin{lemma}\label{thm:must-heavy-query}
    Let $p$ be any probability $\in (0,1]$, and let $s$ be any value $\in [1,n]$. Let $D, W \geq 1$ be the query depth and width, respectively, of $\cA$, and let $Q = D \cdot W$.
    
    Over the randomness of $(\RO, \bfX)$, the probability is $\leq 2^{-(s-1)}$ that we sample an $(\ro, \bfx)$ that satisfy all of the following conditions:
    \begin{enumerate}
        \item $\cA^\ro$ gives small query weight to many positions of $\bfx$: There exists \emph{no} set $I \subseteq [n]$ of size $\geq n - s$ such that for all $i \in I$,
        \[w_{i, x_i}^{\leq D}(\ro) \geq \frac{p^2}{16 Q n}\]
        \item $\bfx$ has high probability given $\ro$: $\Pr[\bfX = \bfx |\RO = \ro] \geq p$.
        \item $\bfx$ is correct: $\bfx \in C$ and $\ro(\bfx) = \mathbf{0}$.
    \end{enumerate}
\end{lemma}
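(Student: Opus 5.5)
Let $S$ be the set of pairs $(\ro,\bfx)$ satisfying conditions 1--3. We want to bound $\Pr[(\RO,\bfX)\in S] = \sum_{(\ro,\bfx)\in S} \frac{1}{|\cH|}\Pr[\bfX=\bfx\mid \RO=\ro]$. The plan is a reprogramming-and-counting argument: each good pair will generate many ``bad'' oracles on which $\bfx$ is still output with comparable probability, and these bad oracles will be charged disjointly.

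\textbf{Step 1: locate low-weight positions.} Fix $(\ro,\bfx)\in S$. By condition 1, the set of $i$ with $w_{i,x_i}^{\leq D}(\ro)\geq p^2/(16Qn)$ has size $< n-s$. Hence there is a set $I_2$ of at least $s+1$ positions with $w_{i,x_i}^{\leq D}(\ro) < p^2/(16Qn)$. We fix such an $I_2$ canonically, say the lexicographically first $s+1$ such positions.

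\textbf{Step 2: build the bad oracles.} For each nonempty $J\subseteq I_2$, let $\ro_J$ be $\ro$ with the outputs on $\{(i,x_i)\}_{i\in J}$ flipped from $0$ to $1$; this is legitimate since $\ro(\bfx)=\mathbf 0$. This gives a set $T_{\ro,\bfx}$ of $2^{s+1}-1\geq 2^s$ distinct oracles, and $\ro_J(\bfx)\neq\mathbf 0$ for each.

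\textbf{Step 3: the output probability is preserved.} Apply the swapping lemma (\cref{thm:swapping-lemma}) with $X=\{(i,x_i)\}_{i\in J}$ and $j=D$:
\[
\|\ket{\psi^\ro_D}-\ket{\psi^{\ro_J}_D}\|_2 \leq 2\sqrt{D\cdot (s+1)\cdot p^2/(16Qn)} \leq 2\sqrt{p^2/16}=p/2,
\]
using $D\le Q$ and $s+1\le n$. (If $s=n$ the set $I_2$ has size $n$, which is still fine.) By \cref{thm:trace-dist-euclidian-dist}, the trace distance is at most $p/2$, so the probability that measurement $M$ yields $\bfx$ changes by at most $p/2$. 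Thus
\[
\Pr[\bfX=\bfx\mid\RO=\ro_J]\geq \Pr[\bfX=\bfx\mid\RO=\ro]-p/2\geq \tfrac12\Pr[\bfX=\bfx\mid \RO=\ro],
\]
where the last inequality uses condition 2.

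\textbf{Step 4: disjointness and summation.} This is the step requiring care. We must show that the pairs $(\ro_J,\bfx)$ are distinct across all $(\ro,\bfx)\in S$ and $J$. Given $(\ro',\bfx)$ arising this way, $\ro$ is recovered by resetting $\ro'$ to $0$ on every $(i,x_i)$, since $\ro(\bfx)=\mathbf 0$ and $\ro,\ro'$ agree elsewhere. So $(\ro,\bfx)$ is determined, and within a fixed $(\ro,\bfx)$ the oracles $\ro_J$ differ. The events $\{\RO=\ro',\bfX=\bfx\}$ are therefore disjoint, giving
\[
1\geq \sum_{(\ro,\bfx)\in S}\sum_{\ro'\in T_{\ro,\bfx}}\frac{1}{|\cH|}\Pr[\bfX=\bfx\mid\RO=\ro'] \geq 2^{s}\cdot\frac12\sum_{(\ro,\bfx)\in S}\frac{1}{|\cH|}\Pr[\bfX=\bfx\mid\RO=\ro] = 2^{s-1}\Pr[(\RO,\bfX)\in S].
\]
This yields $\Pr[(\RO,\bfX)\in S]\leq 2^{-(s-1)}$, as claimed.
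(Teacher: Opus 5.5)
Your proof is correct and follows the paper's argument essentially step for step. You extract at least $s+1$ light coordinates, flip every nonempty subset of them to get at least $2^s$ bad oracles, and use the swapping lemma plus the trace-distance bound to keep $\Pr[\bfX=\bfx]$ above half its original value. You then sum over the disjoint bad sets, recovering $\ro$ by resetting $\bfx$'s symbols to $0$.

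One small slip, which affects nothing: your parenthetical about $s=n$ is off. In that case the empty set is a valid witness $I$, so condition 1 can never hold and $S$ is empty.
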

\begin{proof}
$ $
    \paragraph{Constructing Good and Bad Sets:} Let $S$ be the set of all pairs $(\ro, \bfx)$ that satisfy the conditions of \cref{thm:must-heavy-query} ($S$ is the ``good'' set). $\Pr[S]$ is the probability that all the conditions of \cref{thm:must-heavy-query} are satisfied, over the randomness of sampling $\ro \gets \cH$ and $\bfx \gets \cA^\ro$. We will show that $\Pr[S] \leq 2^{-(s-1)}$. 

    Next, for each $(\ro,\bfx) \in S$, let us construct a set $T_{\ro, \bfx}$ of ``bad'' pairs $(\ro', \bfx)$ on which $\bfx$ is not correct. 
    
    Condition 1 implies that for any set $I \subseteq [n]$ of size $\geq n - s$, there exists an $i \in I$ such that \[w_{i, x_i}^{\leq D}(\ro) < \frac{p^2}{16 Q n}\]
    
    That means there exists another set $I_2 \subseteq [n]$ of size $|I_2| \geq s + 1$ such that for all $i \in I_2$, 
    \[w_{i, x_i}^{\leq D}(\ro) < \frac{p^2}{16 Q n}\]
    
    To construct $T_{\ro, \bfx}$, choose any non-empty subset of $\{(i, x_i)\}_{i \in I_2}$ and reprogram $\ro$ on these inputs, flipping the output from $0$ to $1$. Let us call the new oracle $\ro'$. Let $T_{\ro, \bfx}$ be the set of all pairs $(\ro', \bfx)$ that can be constructed in this way. 
    
    \paragraph{Properties:} First
    \[|T_{\ro, \bfx}| = 2^{|I_2|} - 1 \geq 2^{s + 1} - 1 \geq 2^{s}\]

    Second, for any $(\ro', \bfx) \in T_{\ro, \bfx}$, $\bfx$ is not a correct answer for oracle $\ro'$ because $\ro'(\bfx) \neq \mathbf{0}$.

    Third, the sets $T_{\ro, \bfx}$ are disjoint. That is to say, for any two distinct pairs $(h_1, \bfx_1)$ and $(h_2, \bfx_2)$ in $S$, 
    \[T_{h_1, \bfx_1} \cap T_{h_2, \bfx_2} = \emptyset\]
    For any $(\ro', \bfx') \in T_{\ro, \bfx}$, $\bfx = \bfx'$, and $\ro$ can be computed from $(\ro', \bfx')$ by reprogramming $\ro'$ to map $\bfx'$ to $\mathbf{0}$. This procedure computes $(\ro, \bfx)$ from any $(\ro', \bfx') \in T_{\ro, \bfx}$, so the sets $T_{\ro, \bfx}$ must be disjoint.

    Fourth, 
    \begin{claim}
        For any $(\ro', \bfx) \in T_{\ro, \bfx}$,
        \[\frac{1}{2} \cdot \Pr[\bfX = \bfx|\RO = \ro] \leq \Pr[\bfX = \bfx|\RO = \ro']\]
    \end{claim}
    \begin{proof}
    This follows from the swapping lemma (\cref{thm:swapping-lemma}) and \cref{thm:trace-dist-euclidian-dist}. The adversary samples their output $\bfx$ by applying a measurement $M$ to their final state (either $\ket{\psi_D^\ro}$ or $\ket{\psi_D^{\ro'}}$).
    \begin{align*}
        \Pr[\bfX = \bfx|\RO = \ro] - \Pr[\bfX = \bfx|\RO = \ro'] &\leq \mathsf{TraceDist}\left[\ket{\psi_D^{\ro'}}, \ket{\psi_D^{\ro}}\right]\\
        &\leq \left\|\ket{\psi_D^{\ro'}} -  \ket{\psi_D^{\ro}}\right\|_2\quad \text{(\cref{thm:trace-dist-euclidian-dist})}\\
        &\leq 2 \cdot \sqrt{D \cdot \sum_{\substack{(i, x_i) : \\\ro(i, x_i) \neq \ro'(i, x_i)}} w_{i, x_i}^{\leq D}(\ro)}\quad \text{(\cref{thm:swapping-lemma})}\\
        &\leq 2 \cdot \sqrt{D \cdot \sum_{i \in I_2} w_{i, x_i}^{\leq D}(\ro)}\\
        &\leq 2 \cdot \sqrt{D \cdot \sum_{i \in I_2} \frac{p^2}{16 Q n}}\\
        &= 2 \cdot \sqrt{\frac{D|I_2| p^2}{16Qn}}\\
        &\leq 2 \cdot \sqrt{\frac{Q n p^2}{16Qn}}\\
        &= \frac{p}{2}\\
        \Pr[\bfX = \bfx|\RO = \ro] - \frac{p}{2}&\leq \Pr[\bfX = \bfx|\RO = \ro']\\
        \Pr[\bfX = \bfx|\RO = \ro] - \frac{1}{2} \cdot \Pr[\bfX = \bfx|\RO = \ro]&\leq\\
        \frac{1}{2} \cdot \Pr[\bfX = \bfx|\RO = \ro] &\leq \Pr[\bfX = \bfx|\RO = \ro']
    \end{align*}
    We used the fact that $\frac{p}{2} \leq \frac{1}{2} \cdot \Pr[\bfX = \bfx|\RO = \ro]$.
    \end{proof}

    \paragraph{Finishing the proof:} Now we will show that $\Pr[S]\leq 2^{-(s-1)}$. 
    
    Let $\Pr[T_{\ro, \bfx}]$ be over the randomness of sampling $\ro' \gets \cH$ and $\bfx \gets \cA^{\ro'}$ given $\ro'$. Since the sets $\left(T_{\ro, \bfx}\right)_{(\ro, \bfx) \in S}$ are mutually disjoint,
    \begin{align*}
        1 &\geq \sum_{(\ro, \bfx) \in S} \Pr[T_{\ro, \bfx}]\\
        &= \sum_{(\ro, \bfx) \in S} \sum_{(\ro', \bfx) \in T_{\ro, \bfx}} \frac{1}{|\cH|} \cdot \Pr[\bfX = \bfx|\RO = \ro']\\
        &\geq \sum_{(\ro, \bfx) \in S} \sum_{(\ro', \bfx) \in T_{\ro, \bfx}} \frac{1}{2 |\cH|} \cdot \Pr[\bfX = \bfx|\RO = \ro]\\
        &= \sum_{(\ro, \bfx) \in S} |T_{\ro, \bfx}| \cdot \frac{1}{2|\cH|} \cdot \Pr[\bfX = \bfx|\RO = \ro]\\
        &\geq \sum_{(\ro, \bfx) \in S} 2^{s} \cdot \frac{1}{2 |\cH|} \cdot \Pr[\bfX = \bfx|\RO = \ro]\\
        &= 2^{s - 1} \cdot \sum_{(\ro, \bfx) \in S} \frac{1}{|\cH|} \cdot \Pr[\bfX = \bfx|\RO = \ro]\\
        &= 2^{s - 1} \cdot \Pr[S]\\
        2^{-(s - 1)} &\geq \Pr[S]
    \end{align*}

    We've shown that the conditions of \cref{thm:must-heavy-query} are satisfied with probability $\leq 2^{-(s - 1)}$.
\end{proof}

\subsection*{Low-depth adversaries cannot heavily query a correct answer.}
Here we prove that if the adversary's query depth is $o(\log \secp)$, then with overwhelming probability over $\RO$, $\cA^\RO$ does not \textit{heavily query} any correct answer $\bfx$. Heavy querying means that many symbols of $\bfx$ each receive large query weight from $\cA^\RO$.

\begin{lemma}\label{thm:2-query-security}
    Let $p(\secp)$ be any function satisfying $p(\secp) = \omega\left(2^{-\frac{1}{2} \cdot \left(\secp^{c/2}\right)}\right)$. Let $D(\secp)$ be any function satisfying $1 \leq D(\secp) = o\left(\log \secp\right)$. Let $W(\secp)$ be any function satisfying $1 \leq W(\secp)  = O(\poly(\secp))$, and let $s = \lfloor \zeta n \rfloor$. Let $\cA$ have query width $W$ and query depth $D$, making $Q = W \cdot D$ queries in total.
    
    Then with overwhelming probability over the randomness of $\RO$, $\RO$ satisfies the following condition: 
    \begin{enumerate}
        \item For all $\bfx$ such that $\bfx$ is correct ($\bfx \in C$ and $\RO(\bfx) = \mathbf{0}$), there exists \textit{no} set $I \subseteq [n]$ of size $\geq n - s$ such that for all $i \in I$,
        \[w_{i, x_i}^{\leq D}(\RO) \geq \frac{p^2}{16 Q n}\]
    \end{enumerate}
\end{lemma}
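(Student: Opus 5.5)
The plan follows the bootstrapping outline from the technical overview: define thresholds, locate a critical layer, reprogram the still-light positions, and count with list recoverability.

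\textbf{Setup.} Call a pair $(\ro,\bfx)$ \emph{good} if $\bfx$ is correct under $\ro$ and at least $n-s$ positions $i$ satisfy $w^{\leq D}_{i,x_i}(\ro)\ge p^2/(16Qn)$. It suffices to show that $\Pr_\ro[\exists\bfx:(\ro,\bfx)\text{ good}]$ is negligible. Write $\tau = p^2/(16Qn)$, and define thresholds $t_q = (2Qn/\ell)^{4^{-q}}$ for $q=0,\dots,D$.

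First I verify two parameter facts. The first is $t_D \le \tau$: since $4^D=\lambda^{o(1)}$ and $\log(\ell/2Qn)\approx\lambda^c$, we get $t_D = 2^{-\lambda^{c-o(1)}}$. Meanwhile $\tau \ge 2^{-\lambda^{c/2}}/\poly$ by the hypothesis on $p$, so $t_D\le\tau$ for large $\lambda$. The second is the recurrence $t_q \ge Q/\ell + 2Q\sqrt{Dn\, t_{q-1}}$. Here $\sqrt{t_{q-1}} = t_q^{2}$, so it suffices that $2Q\sqrt{Dn}\,t_q \le 1/2$ and $Q/\ell\le t_q/2$. Both hold because every $t_q\le t_D$ is superpolynomially small while $t_q\ge t_0 = 2Qn/\ell$.

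\textbf{Critical layer.} For a good pair, say it \emph{reaches level $q$} if at least $n-s$ positions have $w^{\le q}_{i,x_i}(\ro)\ge t_q$. Since $t_D\le\tau$, level $D$ is reached. Level $0$ is not, because all weights at layer $0$ are $0<t_0$. So let $q^*\ge1$ be the first level reached. At layer $q^*-1$, there is a set $J$ of at least $s+1$ positions with $w^{\le q^*-1}_{i,x_i}<t_{q^*-1}$; I fix a canonical subset $J$ of size exactly $s+1 \ge \zeta n$. For every nonempty $J'\subseteq J$, flip $\ro$ on $\{(i,x_i)\}_{i\in J'}$ to obtain $\ro'$. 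By the swapping lemma (\cref{thm:swapping-lemma}) applied at each layer $j\le q^*-1$, using monotonicity (\cref{thm:cumulative-query-weight-increases-over-time}) to bound the weight sum by $n\,t_{q^*-1}$, we get $\|\ket{\psi^\ro_{j}}-\ket{\psi^{\ro'}_{j}}\|\le 2\sqrt{Dn t_{q^*-1}}$. For two states at Euclidean distance $\epsilon$, the expectation of a projector differs by at most $2\epsilon$. Summing the per-register projectors over $W$ registers and $q^*$ layers, each heavy position's cumulative weight at layer $q^*$ changes by at most about $2Q\sqrt{Dn t_{q^*-1}}$ (up to a constant absorbed into the thresholds). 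By the recurrence, those $n-s$ positions keep $w^{\le q^*}_{i,x_i}(\ro')\ge Q/\ell$, and hence $w^{\le D}_{i,x_i}(\ro')\ge Q/\ell$ by monotonicity.

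\textbf{Counting.} Each good pair yields at least $2^{s+1}-1\ge 2^{\zeta n}$ oracles $\ro'$. The pair $(\ro,\bfx)$ is recovered from $(\ro',\bfx)$ by resetting $\ro'$ to $0$ on $\bfx$. Fix $\ro'$ and let $S_i=\{x: w^{\le D}_{i,x}(\ro')\ge Q/\ell\}$. By \cref{thm:total-cumulative-query-weight}, $|S_i|\le \ell$. Every $\bfx$ mapping to $\ro'$ is a codeword with $x_i\in S_i$ on at least $n-s\ge(1-\zeta)n$ coordinates, so by \cref{thm:list-recoverability} there are at most $L$ such $\bfx$. Double counting then gives
\[\#\{\ro:\exists \bfx,\ (\ro,\bfx)\text{ good}\}\cdot 2^{\zeta n}\le L\,|\cH|,\]
so the bad probability is at most $L\,2^{-\zeta n}=2^{\tilde O(\lambda^{c'})-\Omega(\lambda)}=\negl(\lambda)$.

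\textbf{Main obstacle.} I expect the hardest step to be the perturbation bound on cumulative weights, which has to be done layer by layer. It needs the per-layer swapping estimate for every layer up to $q^*$, and it needs the reprogrammed set to be defined using layer $q^*-1$ weights under $\ro$ only. Doing so keeps the map from $(\ro,\bfx)$ to $\ro'$ well defined while still letting the doubly exponential thresholds absorb the error. Checking the recurrence with the exact constants, given $D=o(\log\lambda)$, is where I would expect to have to adjust the exponent base, for example using $4^{-q}$ versus a larger base.
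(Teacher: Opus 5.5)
Your proposal is correct and follows essentially the same route as the paper: the same thresholds $t_q=(2Qn/\ell)^{4^{-q}}$, the same critical layer $q^*$ with $s+1$ light positions chosen at layer $q^*-1$ under $\ro$, the same layer-by-layer swapping-lemma perturbation bound, and the same list-recoverability double count. Your check of the recurrence via $\sqrt{t_{q-1}}=t_q^2$ is just a cleaner form of Claim~\ref{thm:lower-bound-t-q}, and the two small slips are harmless: $2^{s+1}-1\ge 2^{\zeta n}$ can fail by a factor below $2$ (use $2^{s+1}-1\ge 2^{s}\ge 2^{\zeta n-1}$), and your $2\epsilon$ projector bound is a factor $2$ looser than the paper's trace-distance bound (\cref{thm:trace-dist-euclidian-dist}), which the thresholds absorb.
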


\begin{proof}
$ $\\\\
    \noindent\textbf{Query Weight Thresholds:}
    Let us define a threshold weight given to a symbol after $q$ query layers. For each $q \in \{0, 1, \dots, D\}$, let the threshold be
    \[t_q = \left(\frac{2 Q n}{\ell}\right)^{\left(4^{-q}\right)}\]

    \begin{claim}\label{thm:bounds-on-t-q}
        For any $q \in \{0, 1, \dots, D\}$ and sufficiently large $\secp$, 
        \[\frac{Q}{\ell} \leq t_q \leq 1\]
    \end{claim}
    \begin{proof}
    We know that
    \begin{align*}
        \ell &= 2^{\left(\secp^c\right)}\\
        Q &= W \cdot D = O(\poly(\secp)) \\
        n &= 2^{\lfloor \log \secp \rfloor} - 1 \leq \secp\\
        1 &\leq n \quad \text{for sufficiently large $\secp$}
    \end{align*}
    Therefore,
    \begin{align*}
        2 Q n &\leq \ell \quad \text{for sufficiently large $\secp$}\\
        \frac{2 Q n}{\ell} &\leq 1\\
    \end{align*}
    Additionally, $4^{-q} \in (0, 1]$. Then
    \begin{align*}
        \frac{2 Q n}{\ell} &\leq \left(\frac{2 Q n}{\ell}\right)^{\left(4^{-q}\right)} \leq 1\\
        \frac{2 Q n}{\ell} &\leq t_q \leq 1\\
        \frac{Q}{\ell} &\leq t_q \leq 1
    \end{align*}
    \end{proof}

    \begin{claim}\label{thm:upper-bound-on-t-q}
        For any $q \in \{0, 1, \dots, D\}$ and sufficiently large $\secp$,
        \[t_q \leq \frac{p^2}{16 Q n}\]
    \end{claim}
    \begin{proof}
    First,
    \begin{align*}
        q &\leq D \leq \frac{1}{4} \cdot \left(c \log \secp - 2\right) \quad \text{for sufficiently large $\secp$}\\
        -q &\geq \frac{1}{4} \cdot \left(2 - c \log \secp\right)\\
        4^{-q} &\geq 4^{\frac{1}{4} \cdot \left(2 - c \log \secp\right)}\\
        &= 2^{\frac{2}{4} \cdot \left(2 - c \log \secp\right)}\\
        &= 2^{\left(1 - \frac{c}{2} \cdot \log \secp\right)}\\
        &= 2 \cdot 2^{- \frac{c}{2} \cdot \log \secp}\\
        &= 2 \cdot \secp^{-\frac{c}{2}}
    \end{align*}

    Second,
    \begin{align*}
        p(\secp) &= \omega\left(2^{-\frac{1}{2} \cdot \left(\secp^{c/2}\right)}\right)\\
        p(\secp) &\geq 2^{-\frac{1}{2} \cdot \left(\secp^{c/2}\right)} \quad \text{for sufficiently large $\secp$}\\
        p^2(\secp) &\geq 2^{-\left(\secp^{c/2}\right)}
    \end{align*}

    Third, the following are true for sufficiently large $\secp$:
    \begin{equation}\label{eq:true-when-secp-is-large}
    \begin{split}
        \frac{2 Q n}{\ell} &\leq 1\\
        1 &\leq 2 Q n\\
        2 \cdot \secp^{-\frac{c}{2}} &\leq 1\\
        \frac{32 Q^2 n^2}{2^{\left(\secp^{\frac{c}{2}}\right)}} &\leq 1
    \end{split}
    \end{equation}
    Let us assume that $\secp$ is large enough that all of the conditions in \cref{eq:true-when-secp-is-large} are satisfied.
    
    Fourth, 
    \begin{align*}
        \frac{2 Q n}{\ell} &\leq 1\\
        \left(\frac{2 Q n}{\ell}\right)^{\left(4^{-q}\right)} &\leq \left(\frac{2 Q n}{\ell}\right)^{\left(2 \cdot \secp^{-\frac{c}{2}}\right)}\\
        t_q &\leq \left(2 Q n\right)^{\left(2 \cdot \secp^{-\frac{c}{2}}\right)} \cdot \ell^{\left(-2 \cdot \secp^{-\frac{c}{2}}\right)}\\
        &\leq \left(2 Q n\right) \cdot \ell^{\left(-2 \cdot \secp^{-\frac{c}{2}}\right)}\\
        &= \left(2 Q n\right) \cdot \left(2^{\left(\secp^c\right)}\right)^{\left(-2 \cdot \secp^{-\frac{c}{2}}\right)}\\
        &= \left(2 Q n\right) \cdot 2^{\left(\secp^c\right) \cdot \left(-2\right) \cdot \left(\secp^{-\frac{c}{2}}\right)}\\
        &= \left(2 Q n\right) \cdot 2^{-2 \cdot \left(\secp^{\frac{c}{2}}\right)}\\
        &= \left(2 Q n\right) \cdot \left(2^{-\left(\secp^{\frac{c}{2}}\right)}\right) \cdot \left(16 Q n\right) \cdot \frac{1}{16 Q n} \cdot \left(2^{-\left(\secp^{\frac{c}{2}}\right)}\right)\\
        &\leq \left(2 Q n\right) \cdot \left(2^{-\left(\secp^{\frac{c}{2}}\right)}\right) \cdot \left(16 Q n\right) \cdot \frac{1}{16 Q n} \cdot p^2\\
        &\leq \frac{32 Q^2 n^2}{2^{\left(\secp^{\frac{c}{2}}\right)}} \cdot \frac{p^2}{16 Q n}\\
        &\leq \frac{p^2}{16 Q n} \quad \text{for sufficiently large $\secp$}
    \end{align*}
    \end{proof}

    \begin{claim}\label{thm:lower-bound-t-q}
        For any $q \in \{1, 2, \dots, D\}$ and sufficiently large $\secp$, 
        \[\frac{Q}{\ell} + 2Q \cdot \sqrt{D n \cdot t_{q-1}} \leq t_q\]
    \end{claim}
    \begin{proof}
    First,
    \begin{align*}
        D &\leq \frac{1}{4} \cdot \left(c \cdot \log \secp - 2\right) \quad \text{for sufficiently large $\secp$}\\
        q &\leq D \leq \frac{1}{4} \cdot \left(c \cdot \log \secp - 2\right)\\
        2q &\leq \frac{1}{2} \cdot \left(c \cdot \log \secp - 2\right) = \frac{c}{2} \cdot \log \secp - 1\\
        2q - 1 &\leq \frac{c}{2} \cdot \log \secp - 2\\
        1 - 2q &\geq 2 - \frac{c}{2} \cdot \log \secp\\
        2^{1 - 2q} &\geq 2^{2 - \frac{c}{2} \cdot \log \secp}\\
        &= 4 \cdot \secp^{-c/2}
    \end{align*}

    Second,
    
    \begin{align*}
        \frac{\ell}{2 Q n} &= 2^{\left(\secp^c - O(\log \secp)\right)}\\
        \left(\frac{\ell}{2 Q n}\right)^{4 \cdot \secp^{-c/2}} &= 2^{\left(\secp^c - O(\log \secp)\right) \cdot \left(4 \cdot \secp^{-c/2}\right)}\\
        &= 2^{\left(4 \cdot \secp^{c/2} - O(\secp^{-c/2} \cdot \log \secp)\right)}\\
        &= 2^{\Theta\left(\secp^{c/2}\right)}\\
        &\geq 9 W^2 D^3 n \quad \text{for sufficiently large $\secp$}
    \end{align*}

    Third,
    \begin{align*}
        \left(\frac{\ell}{2 Q n}\right)^{\left(2^{1-2q}\right)} &= \left(\frac{2 Q n}{\ell}\right)^{\left(-1 \cdot 2^{1-2q}\right)} = \left(\frac{2 Q n}{\ell}\right)^{\left(-2 \cdot 4^{-q}\right)}\\
        &= \left(\frac{2 Q n}{\ell}\right)^{\left[(2 - 4) \cdot 4^{-q}\right]} = \left(\frac{2 Q n}{\ell}\right)^{\left[2 \cdot 4^{-q} - 4^{-(q-1)}\right]}\\
        &= \frac{\left[\left(\frac{2 Q n}{\ell}\right)^{\left(4^{-q}\right)}\right]^2}{\left(\frac{2 Q n}{\ell}\right)^{\left(4^{-(q-1)}\right)}} = \frac{t_q^2}{t_{q-1}}
    \end{align*}

    Finally,
    \begin{align*}
        \ell &\geq 2 Q n \quad \text{for sufficiently large $\secp$}\\
        \frac{\ell}{2 Q n} &\geq 1\\
        \left(\frac{\ell}{2 Q n}\right)^{\left(2^{1-2q}\right)} &\geq \left(\frac{\ell}{2 Q n}\right)^{4 \cdot \secp^{-c/2}}\\
        \frac{t_q^2}{t_{q-1}} &\geq 9W^2D^3n\\
        t_q &\geq 3 DW \sqrt{D n \cdot t_{q-1}} = 3 Q \sqrt{D n \cdot t_{q-1}}\\
        &= Q \sqrt{D n} \cdot \sqrt{t_{q-1}} + 2Q \sqrt{D n \cdot t_{q-1}}\\
        &\geq \sqrt{t_{q-1}} + 2Q \sqrt{D n \cdot t_{q-1}} \quad \text{because $D,Q, n \geq 1$ for sufficiently large $\secp$}\\
        &\geq t_{q-1} + 2Q \sqrt{D n \cdot t_{q-1}} \quad \text{because $t_{q-1} \leq 1$ for sufficiently large $\secp$ (\cref{thm:bounds-on-t-q})}\\
        &\geq \frac{Q}{\ell} + 2Q \sqrt{D n \cdot t_{q-1}} \quad \text{because $t_{q-1} \geq \frac{Q}{\ell}$ for sufficiently large $\secp$ (\cref{thm:bounds-on-t-q})}
    \end{align*}
    \end{proof}

    \paragraph{Constructing the Good Set:} Let us construct a set $S$ of ``good'' oracles $\ro$. For every $\ro \in \cH$ that violates the condition of \cref{thm:2-query-security}, pick an $\bfx$ such that: 
    \begin{enumerate}
        \item $\bfx$ is correct ($\bfx \in C$ and $\ro(\bfx) = \mathbf{0}$), and
        \item there exists a set $I \subseteq [n]$ of size $\geq n - s$ such that for all $i \in I$,
        \[w_{i, x_i}^{\leq D}(\ro) \geq \frac{p^2}{16 Q n}\]
    \end{enumerate}
    Such an $\bfx$ must exist. Then add this $(\ro, \bfx)$ pair to $S$. Note that no two entries of $S$ have the same $\ro$-value, so the size of $S$ is the number of $\ro$-values that violate the condition of \cref{thm:2-query-security}. Our goal is to show that $\frac{|S|}{|\cH|} = \negl(\secp)$. This implies that with overwhelming probability over the choice of $\ro$, the condition of \cref{thm:2-query-security} is satisfied.

    \paragraph{Constructing the Bad Set:} For any $(\ro, \bfx) \in S$, we will construct a set of ``bad'' oracles $T_{\ro,\bfx}$ where $\bfx$ is no longer correct, but $\cA$ still wastes query weight on $\bfx$. 

    First, let us say that \textbf{$\cA^\ro$ heavily queries $\bfx$} if there exists a set $I_1 \subseteq [n]$ of size $|I_1| \geq n - s$ such that for all $i \in I_1$,
    \[w_{i, x_i}^{\leq D}(\ro) \geq \frac{Q}{\ell}\]
    
    Additionally, let us say that \textbf{$\cA^\ro$ strongly-heavily queries $\bfx$ after $q$ queries} if there exists a set $I_1 \subseteq [n]$ of size $|I_1| \geq n-s$ such that for all $i \in I_1$,
    \[w_{i, x_i}^{\leq q}(\ro) \geq t_q\]
    
    Second, let $q^*$ be the first query in $[D]$ for which $\cA^\ro$ strongly-heavily queries $\bfx$ after $q^*$ queries.
    
    Such a $q^*$ must exist. Since $(\ro, \bfx) \in S$, there exists a set $I_1 \subseteq [n]$ of size $\geq n - s$ such that for all $i \in I_1$,
    \[w_{i, x_i}^{\leq D}(\ro) \geq \frac{p^2}{16 Q n} \geq t_D\]
    So $\cA^\ro$ strongly-heavily queries $\bfx$ after $D$ queries.

    Third, let $I_2 \subseteq [n]$ be the set of indices of $\bfx$ of size $|I_2| = s + 1$ that receive the smallest weight on the first $q^*-1$ queries. Formally, we require that for all $i \in I_2$ and all $i' \in [n] \backslash I_2$,
    \[w_{i, x_i}^{\leq q^*-1}(\ro) \leq w_{i', x_{i'}}^{\leq q^*-1}(\ro)\]

    Fourth, let us construct our bad set $T_{\ro, \bfx}$ as follows. Choose any non-empty subset of $\{(i, x_i)\}_{i \in I_2}$, and reprogram $\ro$ on this set, flipping the outputs from $0$ to $1$. Call the new oracle $\ro'$, and add $\ro'$ to $T_{\ro, \bfx}$. 
    
    \paragraph{Properties:} Let's state some useful properties of $T_{\ro, \bfx}$ and $I_2$.
    
    First, note that for any $\ro' \in T_{\ro, \bfx}$, $\ro'(\bfx) \neq \mathbf{0}$. Second,
    \[|T_{\ro,\bfx}| = 2^{|I_2|} - 1 = 2^{s + 1} - 1 \geq 2^{s}\]
    
    Third, before the $q^*$-th query, every reprogrammed symbol received small cumulative query weight.
    \begin{lemma}\label{thm:I-2-has-small-query-weight}
        For all $(\ro, \bfx) \in S$, all $i \in I_2$, and all $q \in [q^*]$,
        \begin{align*}
            w_{i, x_i}^{\leq q-1}(\ro) < t_{q^*-1}
        \end{align*}
    \end{lemma}
    \begin{proof}
        We chose $q^*$ to be the \textit{first} query in $[D]$ after which $\cA^\ro$ strongly-heavily queries $\bfx$. Then $\cA^\ro$ does not strongly-heavily query $\bfx$ after $q^* - 1$ queries. That means there does not exist a set $I_1 \subseteq [n]$ of size $|I_1| \geq n - s$ such that for all $i \in [I_1]$:
        \begin{equation}\label{eq:large-query-weight}
            w_{i, x_i}^{\leq q^*-1}(\ro) \geq t_{q^* - 1}
        \end{equation}
        Then the number of $i$-values in $[n]$ for which \cref{eq:large-query-weight} is satisfied is $\leq n - s - 1$. The number of $i$-values for which \cref{eq:large-query-weight} is not satisfied is
        \[\geq n - \left(n - s - 1\right) = s + 1 \] 
        
        Since $I_2$ contains the $s + 1$ indices with the smallest values of $w_{i, x_i}^{\leq q^*-1}(\ro)$, that means that for any $i \in I_2$,
        \[w_{i, x_i}^{\leq q^*-1}(\ro) < t_{q^* - 1}\]
        Since $q \leq q^*$,
        \begin{align*}
            w_{i, x_i}^{\leq q-1}(\ro) &\leq w_{i, x_i}^{\leq q^*-1}(\ro)\\
            &< t_{q^* - 1}
        \end{align*}
    \end{proof}

    Fourth, even if we replace a good $\ro$ with a bad $\ro' \in T_{\ro, \bfx}$, the adversary will still heavily query $\bfx$.
    \begin{lemma}
        For any $(\ro, \bfx) \in S$, any $\ro' \in T_{\ro, \bfx}$, and sufficiently large $\secp$, $\cA^{\ro'}$ heavily queries $\bfx$.
    \end{lemma}
    \begin{proof}
        Let $I_1, I_2$ be the sets defined in the construction of $T_{\ro, \bfx}$. That's to say, $I_1$ is a set $\subseteq [n]$ of size $|I_1| \geq n - s$ such that for all $i \in I_1$, 
        \[w_{i, x_i}^{\leq q^*}(\ro) \geq t_{q^*}\]
        And $I_2 \subseteq [n]$ is a set of size $|I_2| = s+1$ such that for all $i \in I_2$
        \[w_{i, x_i}^{\leq q^*-1}(\ro) < t_{q^*-1}\]

        Next, for any $q \in [q^*]$, any $\ro' \in T_{\ro, \bfx}$, and any $i \in I_1$:
        \begin{align*}
            w_{i, x_i}^{q}(\ro) - w_{i, x_i}^{q}(\ro') &= \sum_{k \in [W]} \Tr\left[\left(\ketbra{i,x_i}_{R_{Q_k}} \otimes \mathbb{I}_{R \backslash R_{Q_k}}\right) \left(\ketbra{\psi_{q-1}^\ro} - \ketbra{\psi_{q-1}^{\ro'}}\right)\right]\\
            &\leq \sum_{k \in [W]} \mathsf{TraceDist}\left(\ket{\psi^{\ro}_{q-1}},\ket{\psi^{\ro'}_{q-1}}\right) = W \cdot \mathsf{TraceDist}\left(\ket{\psi^{\ro}_{q-1}},\ket{\psi^{\ro'}_{q-1}}\right)\\
            &\leq W \cdot \left\|\ket{\psi^{\ro}_{q-1}} - \ket{\psi^{\ro'}_{q-1}}\right\|_2 \quad \text{(\cref{thm:trace-dist-euclidian-dist})}\\
            &\leq 2W \cdot \sqrt{(q-1) \cdot \sum_{\substack{(i',x'):\\
            \ro(i',x') \neq \ro'(i',x')}} w_{i',x'}^{\leq q-1}(\ro)} \quad \text{(\cref{thm:swapping-lemma})}\\
            &\leq 2W \cdot \sqrt{(q-1) \cdot \sum_{i \in I_2} w_{i,x_i}^{\leq q-1}(\ro)}\\
            &\leq 2W \cdot \sqrt{(q-1) \cdot \sum_{i \in I_2} t_{q^*-1}} \quad \text{(\cref{thm:I-2-has-small-query-weight})}\\
            &=  2W \cdot \sqrt{(q-1) \cdot |I_2| \cdot t_{q^*-1}}\\
            &\leq 2W \cdot \sqrt{(q-1) \cdot n \cdot t_{q^*-1}}\\
            &\leq 2W \cdot \sqrt{D n \cdot t_{q^*-1}}\\
            w_{i, x_i}^{q}(\ro) - 2W \cdot \sqrt{D n \cdot t_{q^*-1}} &\leq w_{i, x_i}^{q}(\ro')
        \end{align*}
        
        Now let us sum over all $q \in [q^*]$.
        \begin{align*}
            \sum_{q = 1}^{q^*} w_{i, x_i}^{q}(\ro') &\geq \sum_{q=1}^{q^*} \left(w_{i, x_i}^{q}(\ro) - 2W \cdot \sqrt{D n \cdot t_{q^*-1}}\right)\\
            w_{i, x_i}^{\leq q^*}(\ro') &\geq w_{i, x_i}^{\leq q^*}(\ro) - q^* \cdot 2W \cdot \sqrt{D n \cdot t_{q^*-1}}\\
            &\geq t_{q^*} - q^* \cdot 2W \cdot \sqrt{D n \cdot t_{q^*-1}}\\
            &\geq t_{q^*} - D \cdot 2W \cdot \sqrt{D n \cdot t_{q^*-1}}\\
            &= t_{q^*} - 2Q \cdot \sqrt{D n \cdot t_{q^*-1}}\\
            &\geq \frac{Q}{\ell} + 2Q \cdot \sqrt{D n \cdot t_{q^*-1}} - 2Q \cdot \sqrt{D n \cdot t_{q^*-1}} \quad \text{(\cref{thm:lower-bound-t-q})}\\
            &= \frac{Q}{\ell} \\
            w_{i, x_i}^{\leq D}(\ro') &\geq \frac{Q}{\ell}
        \end{align*}
        Therefore, $\cA^{\ro'}$ heavily queries $\bfx$.
    \end{proof}

    \begin{claim}
        The number of codewords that $\cA^{\ro'}$ heavily queries is $\leq L := 2^{\tilde{O}(\secp^{c'})}$.
    \end{claim}
    \begin{proof}
        The total cumulative query weight given to all inputs is $\leq Q = D \cdot W$ (\cref{thm:total-cumulative-query-weight}). The number of inputs $(i, x)$ with cumulative query weight $w_{i, x}^{\leq D}(\ro') \geq \frac{Q}{\ell}$ is 
        \[\leq Q \cdot \frac{\ell}{Q} = \ell\] 
        If $\cA^{\ro'}$ heavily queries $\bfx$ then at least $n - s$ coordinates of $\bfx$ belong to the list of inputs that have cumulative query weight $\geq \frac{Q}{\ell}$. Note that 
        \[n-s \geq n - \zeta \cdot n = (1 - \zeta) \cdot n\]
        Then the list recoverability property (\cref{thm:list-recoverability}) says that the number of codewords $\bfx$ that satisfy this property is $\leq L$.
    \end{proof}

    \begin{claim}
        Any $\ro'$ belongs to $T_{\ro, \bfx}$ for at most $L$ values of $(\ro, \bfx) \in S$.
    \end{claim}
    \begin{proof}
        Assume toward contradiction that there are $L+1$ distinct values of $(\ro, \bfx) \in S$ such that $\ro' \in T_{\ro, \bfx}$. Call them $(h_j, \bfx_j)_{j \in [L+1]}$. We know that $\cA^{\ro'}$ heavily queries each $\bfx_j$, but there are at most $L$ codewords that $\cA^{\ro'}$ heavily queries. By the pigeonhole principle, there are two different values $j, j' \in [L+1]$ such that $\bfx_j = \bfx_{j'}$.
        
        Furthermore, $h_{j}$ can be constructed from $\ro'$ and $\bfx_{j}$ by reprogramming $\ro'$ to map $\bfx_{j}$ to $\mathbf{0}$. By the same procedure, $h_{j'}$ can be constructed from $\ro'$ and $\bfx_{j'}$. Since $\bfx_j = \bfx_{j'}$, this implies that $h_j = h_{j'}$ because they are constructed from $(\ro', \bfx_j)$ by the same procedure. In summary $(h_j, \bfx_j) = (h_{j'}, \bfx_{j'})$, which contradicts the claim that each $(h_{j''}, \bfx_{j''})$ is distinct. Therefore the initial assumption must be false, so in fact there are at most $L$ values of $(\ro,\bfx) \in S$ for which $\ro' \in T_{\ro, \bfx}$.
    \end{proof}

    Now we can finish the proof. Since each $\ro' \in \cH$ belongs to at most $L$ different sets $T_{\ro, \bfx}$,
    \begin{align*}
        |\cH| \cdot L &\geq \sum_{\ro' \in \cH} \sum_{(\ro, \bfx) \in S}  \mathbbm{1}_{\ro' \in T_{\ro, \bfx}}\\
        &= \sum_{(\ro, \bfx) \in S} \sum_{\ro' \in \cH} \mathbbm{1}_{\ro' \in T_{\ro, \bfx}}\\
        &= \sum_{(\ro, \bfx) \in S} |T_{\ro, \bfx}|\\
        &\geq \sum_{(\ro, \bfx) \in S} 2^{s}\\
        &= |S| \cdot 2^{s} = |S| \cdot 2^{\lfloor \zeta n \rfloor}\\
        &= |S| \cdot 2^{\Omega(\secp)}\\
        \frac{L}{2^{\Omega(\secp)}} &\geq \frac{|S|}{|\cH|}\\
        2^{\tilde{O}(\secp^{c'}) - \Omega(\secp)} &=\\
        \negl(\secp) &=
    \end{align*}
    $2^{\tilde{O}(\secp^{c'}) - \Omega(\secp)} = \negl(\secp)$ because $c' < 1$.

    We've shown that $\frac{|S|}{|\cH|}$ is $\negl(\secp)$, which completes the proof.    
\end{proof}

\subsection*{Finishing the proof}

Let us assume toward contradiction that \cref{thm:CR-for-depth-bounded-adversaries} is false. Then there is some $D = o(\log \secp)$, some $\minent = o(\secp^{c/2})$, some $W = \poly(\secp)$, some inverse-polynomial $\delta(\secp)$, and some adversary $\cA$ making $D$ layers of $W$ parallel queries such that 
\begin{align*}
    \Pr_{\RO}\left[\Pr\left[\Verify^\RO [1^\secp, 1^\minent, \cA^\RO(1^\secp, 1^\minent)] \neq \bot\right] \geq \delta(\secp) \land \Minent\left(\cA^\RO_\top (1^\secp, 1^\minent)\right) \leq \minent(\secp)\right] = \text{non-negl}(\secp).
\end{align*}

Let us define two parameters.
\begin{align*}
    \text{Let } p(\secp) &= \delta(\secp) \cdot 2^{-\minent(\secp)}\\
    s(\secp) &= \lfloor \zeta n \rfloor
\end{align*}

Let us also define two events based on $\RO$:
\begin{itemize}
    \item Let $\text{Event}_1$ be the event that $\RO$ satisfies:
    \[\Pr\left[\Verify^\RO [1^\secp, 1^\minent, \cA^\RO(1^\secp, 1^\minent)] \neq \bot\right] \geq \delta(\secp) \land \Minent\left(\cA^\RO_\top (1^\secp, 1^\minent)\right) \leq \minent(\secp)\]
    \item Let $\text{Event}_2$ be the event that $\RO$ satisfies the following condition:

    For all $\bfx$ such that $\bfx$ is correct ($\bfx \in C$ and $\RO(\bfx) = \mathbf{0}$), there exists \textit{no} set $I \subseteq [n]$ of size $\geq n - s$ such that for all $i \in I$,
        \[w_{i, x_i}^{\leq D}(\RO) \geq \frac{p^2}{16 Q n}\]
\end{itemize}
We have that $\Pr_\RO[\text{Event}_1] = \text{non-negl}(\secp)$ if the adversary breaks the min-entropy property.

Finally, let us define $\text{Event}_3$ based on $(\RO, \bfX)$ as the event that the conditions of \cref{thm:must-heavy-query} are satisfied.
\begin{itemize}
    \item Let $\text{Event}_3$ be the event that $(\RO, \bfX)$ have values $(\ro, \bfx)$ that satisfy:
    \begin{enumerate}
        \item $\cA^\ro$ gives small query weight to many positions of $\bfx$: There exists \emph{no} set $I \subseteq [n]$ of size $\geq n - s$ such that for all $i \in I$,
        \[w_{i, x_i}^{\leq D}(\ro) \geq \frac{p^2}{16 Q n}\]
        \item $\bfx$ has high probability given $\ro$: $\Pr[\bfX = \bfx |\RO = \ro] \geq p$.
        \item $\bfx$ is correct: $\bfx \in C$ and $\ro(\bfx) = \mathbf{0}$.
    \end{enumerate}
\end{itemize}

\begin{lemma}
    Let $\ro$ be any value of $\RO$ that satisfies $\text{Event}_1$ and $\text{Event}_2$. Then there exists an $\bfx^\ro$ such that $(\ro, \bfx^\ro)$ satisfy $\text{Event}_3$.
\end{lemma}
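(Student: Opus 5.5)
The plan is to use $\text{Event}_1$ to produce a single high-probability accepting output. Then $\text{Event}_2$ supplies condition 1 of $\text{Event}_3$ for that output automatically, because the output is correct.

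Fix $\ro$ satisfying $\text{Event}_1$ and $\text{Event}_2$. By $\text{Event}_1$, the acceptance probability $\Pr[\Verify^\ro(1^\secp,1^\minent,\cA^\ro(1^\secp,1^\minent))\neq\bot]$ is at least $\delta(\secp)$. Also, the conditional distribution $\cA^\ro_\top$ has min-entropy at most $\minent$. So there is some $\bfx^\ro$ with
\[\Pr[\cA^\ro_\top = \bfx^\ro] \geq 2^{-\minent}.\]
Since $\Verify$ outputs $\bfx$ exactly when the parsed proof is $\bfx$ and the checks pass, $\bfx^\ro$ is in the support of the accepted outputs. Hence $\bfx^\ro\in C$ and $\ro(\bfx^\ro)=\mathbf{0}$, which is condition 3 of $\text{Event}_3$. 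Moreover,
\[\Pr[\bfX=\bfx^\ro\mid \RO=\ro] = \Pr[\text{accept}]\cdot\Pr[\cA^\ro_\top=\bfx^\ro] \geq \delta\cdot 2^{-\minent} = p,\]
using that the accepted output equals $\bfX$ whenever it is not $\bot$. This gives condition 2.

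For condition 1, I use that $\bfx^\ro$ is correct. $\text{Event}_2$ then says that no set $I\subseteq[n]$ of size $\geq n-s$ has $w^{\leq D}_{i,x_i}(\ro)\geq p^2/(16Qn)$ for all $i\in I$. This is verbatim condition 1 of $\text{Event}_3$. The three conditions hold, so $(\ro,\bfx^\ro)$ satisfies $\text{Event}_3$.

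The only delicate step is the bookkeeping identifying $\cA^\ro_\top$ with $\bfX$ conditioned on acceptance. I need to check that $\Verify$ returns the prover's $\bfx$ itself, so that the events coincide. This holds because the candidate protocol's $\Verify$ outputs $\bfx$ unchanged upon acceptance. I also need the event $\{\bfX=\bfx^\ro\}$ to be contained in acceptance, which holds because $\bfx^\ro$ is correct, so every run outputting $\bfx^\ro$ is accepted.
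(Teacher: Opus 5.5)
Your proposal is correct and follows essentially the same route as the paper. You extract $\bfx^\ro$ from the min-entropy bound, argue it is correct because it has positive probability under the accepted distribution, convert to $\Pr[\bfX=\bfx^\ro\mid\RO=\ro]\geq\delta\cdot 2^{-\minent}=p$ via the acceptance probability, and read off condition 1 from $\text{Event}_2$. The only cosmetic difference is that you write this conversion as the exact product $\Pr[\text{accept}]\cdot\Pr[\cA^\ro_\top=\bfx^\ro]$, whereas the paper bounds the conditional probability by $\Pr[\bfX=\bfx^\ro\mid\RO=\ro]/\delta$.
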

\begin{proof}
The first condition of $\text{Event}_1$ is $\Pr\left[\Verify^\RO [1^\secp, 1^\minent, \cA^\RO(1^\secp, 1^\minent)] \neq \bot\right] \geq \delta(\secp)$, which is equivalent to:
    \[\Pr_\bfX[\bfX \in C \land \RO(\bfX) = \mathbf{0} | \RO = \ro] \geq \delta(\secp).\]
Next, 
\begin{align*}
    \Minent\left(\cA^\RO_\top (1^\secp, 1^\minent)\right) &= - \log\left(\max_{\bfx} \Pr_\bfX\left[\bfX = \bfx | \RO = \ro, (\bfX \in C \land \RO(\bfX) = \mathbf{0})\right]\right)
\end{align*}
Furthermore, for any $\bfx$ that is correct ($\bfx \in C \land \ro(\bfx) = \mathbf{0}$),
\begin{align*}
    \Pr_\bfX\left[\bfX = \bfx | \RO = \ro, (\bfX \in C \land \RO(\bfX) = \mathbf{0})\right] &= \frac{\Pr_\bfX\left[\bfX = \bfx, (\bfX \in C \land \RO(\bfX) = \mathbf{0}) | \RO = \ro\right]}{\Pr_\bfX\left[\bfX \in C \land \RO(\bfX) = \mathbf{0}|\RO = \ro\right]}\\
    &= \frac{\Pr_\bfX\left[\bfX = \bfx | \RO = \ro\right]}{\Pr_\bfX\left[\bfX \in C \land \RO(\bfX) = \mathbf{0}|\RO = \ro\right]}\\
    &\leq \frac{\Pr_\bfX\left[\bfX = \bfx | \RO = \ro\right]}{\delta(\secp)}
\end{align*}

The second condition of $\text{Event}_1$ is $\Minent\left(\cA^\RO_\top (1^\secp, 1^\minent)\right) \leq \minent(\secp)$, which implies that:
\begin{align*}
    -\log\left(\max_{\bfx} \Pr_\bfX[\bfX = \bfx | \RO = \ro, (\bfX \in C \land \RO(\bfX) = \mathbf{0})]\right) &\leq \minent(\secp)\\
    \log\left(\max_{\bfx} \Pr_\bfX[\bfX = \bfx | \RO = \ro, (\bfX \in C \land \RO(\bfX) = \mathbf{0})]\right) &\geq -\minent(\secp)\\
    \max_{\bfx} \Pr_{\bfX}[\bfX = \bfx | \RO = \ro, (\bfX \in C \land \RO(\bfX) = \mathbf{0})] &\geq 2^{-\minent(\secp)}
\end{align*}
Then there exists an $\bfx^\ro$ such that
\begin{align*}
    \Pr_\bfX[\bfX = \bfx^\ro | \RO = \ro, (\bfX \in C \land \RO(\bfX) = \mathbf{0})] &\geq 2^{-\minent(\secp)} > 0
\end{align*}
Furthermore, this $\bfx^\ro$ is correct ($\bfx^\ro \in C \land \ro(\bfx^\ro) = \mathbf{0}$) because otherwise, $\Pr_\bfX[\bfX = \bfx^\ro | \RO = \ro, (\bfX \in C \land \RO(\bfX) = \mathbf{0})] = 0$. Then
\begin{align*}
    \frac{\Pr_\bfX\left[\bfX = \bfx^\ro | \RO = \ro\right]}{\delta(\secp)} &\geq \Pr_\bfX[\bfX = \bfx^\ro | \RO = \ro, (\bfX \in C \land \RO(\bfX) = \mathbf{0})]\\
    &\geq 2^{-\minent(\secp)}\\
    \Pr_\bfX\left[\bfX = \bfx^\ro | \RO = \ro\right] &\geq \delta(\secp) \cdot 2^{-\minent(\secp)}\\
    &= p(\secp)
\end{align*}

In summary, if $\text{Event}_1$ occurs, then there exists an $\bfx^\ro$ such that:
\begin{itemize}
    \item $\bfx^\ro$ is correct: $\bfx^\ro \in C \land \ro(\bfx^\ro) = \mathbf{0}$, and 
    \item $\Pr_\bfX\left[\bfX = \bfx^\ro | \RO = \ro\right] \geq p(\secp)$.
\end{itemize}

Furthermore, since this $\bfx^\ro$ is correct, then $\text{Event}_2$ implies that this $\bfx^\ro$-value also satisfies the following condition:
\begin{itemize}
    \item There exists \textit{no} set $I \subseteq [n]$ of size $\geq n - s$ such that for all $i \in I$,
        \[w_{i, x^\ro_i}^{\leq D}(\ro) \geq \frac{p^2}{16 Q n}\]
\end{itemize}
This shows that $(\ro, \bfx^\ro)$ satisfy all the conditions of $\text{Event}_3$.
\end{proof}

\begin{lemma}
    $\Pr_\RO[\text{Event}_2]$ is overwhelming.
\end{lemma}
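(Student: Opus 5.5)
This follows by applying \cref{thm:2-query-security} with the parameters fixed in the contradiction setup. The condition defining $\text{Event}_2$ is exactly the condition in the conclusion of \cref{thm:2-query-security}, with the same threshold $\frac{p^2}{16Qn}$ and the same $s = \lfloor \zeta n \rfloor$. So the plan is to check that the hypotheses of that lemma hold for $p(\secp) = \delta(\secp)\cdot 2^{-\minent(\secp)}$, $D = o(\log\secp)$, $W = \poly(\secp)$, and $s = \lfloor \zeta n\rfloor$, and then read off that $\Pr_\RO[\text{Event}_2]$ is overwhelming.

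The hypotheses $1 \le D(\secp) = o(\log\secp)$ and $1 \le W(\secp) = O(\poly(\secp))$ are given directly by the assumptions on the adversary. The only real check is the growth condition $p(\secp) = \omega\bigl(2^{-\frac12 \secp^{c/2}}\bigr)$. Write $\log p = \log\delta - \minent$. Since $\delta$ is inverse polynomial, $\log \delta = -O(\log\secp)$. Since $\minent = o(\secp^{c/2})$, we get $-\log p = o(\secp^{c/2}) + O(\log\secp) = o(\secp^{c/2})$. Therefore
\[
\log p + \tfrac12\secp^{c/2} \;\ge\; \tfrac12\secp^{c/2} - o(\secp^{c/2}) \;\to\; \infty,
\]
which gives $p / 2^{-\frac12\secp^{c/2}} \to \infty$, i.e.\ $p = \omega\bigl(2^{-\frac12 \secp^{c/2}}\bigr)$. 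We also need $p \in (0,1]$, which holds because $\delta \le 1$ and $2^{-\minent} \le 1$.

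With the hypotheses verified, \cref{thm:2-query-security} says that with overwhelming probability over $\RO$, every correct $\bfx$ has no set $I$ of size $\ge n-s$ on which all weights $w^{\le D}_{i,x_i}(\RO)$ are at least $\frac{p^2}{16Qn}$. That is precisely $\text{Event}_2$.

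The main obstacle is the asymptotic bookkeeping around the constant $\tfrac12$ in the exponent. The hypothesis $\minent = o(\secp^{c/2})$, rather than merely $O(\secp^{c/2})$, is what guarantees that $p$ beats $2^{-\frac12\secp^{c/2}}$. One should also confirm that \cref{thm:2-query-security} places no hidden dependence between $p$ and $Q$ beyond what \cref{thm:upper-bound-on-t-q} used. There, $32Q^2n^2/2^{\secp^{c/2}} \le 1$ holds for polynomial $Q$, so no further conditions are needed.
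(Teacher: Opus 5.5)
Your proposal is correct and follows the paper's proof. The paper also reduces the lemma to \cref{thm:2-query-security} and checks only the growth condition $p = \omega\bigl(2^{-\frac{1}{2}\secp^{c/2}}\bigr)$: it writes $\delta = \secp^{-d}$ and uses $d\log\secp + \minent(\secp) = o(\secp^{c/2})$ to get $p \geq \varepsilon \cdot 2^{-\frac{1}{2}\secp^{c/2}}$ for every constant $\varepsilon > 0$, which is the same bookkeeping as your argument.
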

\begin{proof}
\Cref{thm:2-query-security} says that $\Pr_\RO[\text{Event}_2]$ is overwhelming as long as $p(\secp) = \omega\left(2^{-\frac{1}{2} \cdot \left(\secp^{c/2}\right)}\right)$. So it suffices to prove that $p$ satisfies the condition of \cref{thm:2-query-security}.

\begin{claim}
    $p(\secp) = \omega\left(2^{-\frac{1}{2} \cdot \left(\secp^{c/2}\right)}\right)$.
\end{claim}
\begin{proof}
    $\delta(\secp)$ is inverse-polynomial, so there exists a $d > 0$ such that
    \begin{align*}
        \delta(\secp) &= \secp^{-d} = 2^{- d \cdot \log \secp}\\
        p(\secp) &= 2^{- d \cdot \log \secp} \cdot 2^{-\minent(\secp)}\\
        &= 2^{-\left[d \cdot \log \secp + \minent(\secp)\right]}
    \end{align*}
    Next, note that 
    \begin{align*}
        \minent(\secp) &= o(\secp^{c/2})\\
        d \cdot \log \secp &= o(\secp^{c/2})\\
        \secp^{c/2} &= \omega(1)
    \end{align*}
    
    Then for any constant $\varepsilon > 0$, and for sufficiently large $\secp$,
    \begin{align*}
        \minent(\secp) &\leq \frac{1}{8} \cdot \secp^{c/2}\\
        d \cdot \log \secp &\leq \frac{1}{8} \cdot \secp^{c/2}\\
        4 \cdot \log \varepsilon &\leq \secp^{c/2}
    \end{align*}
    In this case,
    \begin{align*}
        \log \varepsilon &\leq \frac{1}{4} \cdot \secp^{c/2}\\
        -\frac{1}{4} \cdot \secp^{c/2} &\leq -\log \varepsilon\\\\
        d \cdot \log \secp + \minent(\secp) &\leq \frac{1}{4} \cdot \secp^{c/2}\\
        &= - \frac{1}{4} \cdot \secp^{c/2} + \frac{1}{2} \cdot \secp^{c/2}\\
        &\leq - \log \varepsilon + \frac{1}{2} \cdot \secp^{c/2}\\
        -\left[d \cdot \log \secp + \minent(\secp)\right] &\geq \log \varepsilon - \frac{1}{2} \cdot \secp^{c/2}\\
        p(\secp) &\geq 2^{\log \varepsilon - \frac{1}{2} \cdot \secp^{c/2}}\\
        &= \varepsilon \cdot 2^{- \frac{1}{2} \cdot \secp^{c/2}}
    \end{align*}

    Therefore, $p(\secp) = \omega\left(2^{- \frac{1}{2} \cdot \secp^{c/2}}\right)$
\end{proof}

\end{proof}

\begin{lemma}
    There is some non-negligible function $\text{non-negl}(\secp)$ such that $\Pr_{\RO, \bfX}[\text{Event}_3] \geq \text{non-negl}(\secp) \cdot p(\secp)$.
\end{lemma}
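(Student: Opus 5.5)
The plan is to lower-bound $\Pr_{\RO,\bfX}[\text{Event}_3]$ by restricting to oracles that satisfy both $\text{Event}_1$ and $\text{Event}_2$. For each such $\ro$, the preceding lemma gives a specific $\bfx^\ro$ with $(\ro,\bfx^\ro)$ satisfying $\text{Event}_3$. By its second condition, this $\bfx^\ro$ is output with conditional probability at least $p$. Hence
\begin{align*}
\Pr_{\RO,\bfX}[\text{Event}_3] &\geq \sum_{\ro \in \text{Event}_1\cap\text{Event}_2} \frac{1}{|\cH|}\cdot \Pr[\bfX=\bfx^\ro \mid \RO=\ro]\\
&\geq p(\secp)\cdot \Pr_\RO[\text{Event}_1\land \text{Event}_2].
\end{align*}
The first inequality holds because the events $\{\RO=\ro,\bfX=\bfx^\ro\}$, for distinct $\ro$, are disjoint subsets of $\text{Event}_3$.

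It then remains to show that $\Pr_\RO[\text{Event}_1\land\text{Event}_2]$ is non-negligible. I would use a union bound:
\[\Pr[\text{Event}_1\land\text{Event}_2]\geq \Pr[\text{Event}_1]-\Pr[\neg\text{Event}_2].\]
Here $\Pr[\text{Event}_1]$ is non-negligible by the contradiction hypothesis. By the previous lemma, $\Pr[\text{Event}_2]$ is overwhelming, so $\Pr[\neg\text{Event}_2]$ is negligible. A non-negligible function minus a negligible one is still non-negligible: on infinitely many $\secp$ it exceeds $1/\secp^{d}$, and the negligible term is eventually below $1/(2\secp^{d})$. Setting $\text{non-negl}(\secp)$ to be this difference completes the argument.

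The only mildly delicate step is the precise meaning of ``non-negligible''. It holds only infinitely often, not for all large $\secp$, so the difference is non-negligible only along that subsequence. That is enough for the subsequent contradiction with \cref{thm:must-heavy-query}: there $\Pr[\text{Event}_3]\leq 2^{-(s-1)}=2^{-\Omega(\secp)}$, while $p\cdot\text{non-negl} \geq 2^{-o(\secp^{c/2})}/\poly$. Everything else is bookkeeping; in particular, the choice of $\bfx^\ro$ must be a fixed function of $\ro$ so that the events being summed are disjoint.
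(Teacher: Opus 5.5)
Your proposal is correct and follows essentially the same route as the paper. It lower-bounds $\Pr[\text{Event}_3]$ by $p \cdot \Pr_\RO[\text{Event}_1 \land \text{Event}_2]$ using the witness $\bfx^\ro$ from the preceding lemma, and it shows $\Pr_\RO[\text{Event}_1 \land \text{Event}_2] \geq \Pr_\RO[\text{Event}_1] - \Pr_\RO[\lnot\text{Event}_2]$ is non-negligible. Your explicit disjointness argument and your infinitely-often treatment of ``non-negligible'' spell out rigorously what the paper states informally.
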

\begin{proof}
    We already know that $\Pr_{\RO}[\text{Event}_1]$ is non-negligible because the adversary breaks the certifiable min-entropy property, and we proved that $\Pr_{\RO}[\text{Event}_2]$ is overwhelming. This implies that $\Pr_{\RO}[\text{Event}_1 \land \text{Event}_2]$ is non-negligible.
    \begin{align*}
        \Pr_\RO[\text{Event}_1] &= \Pr_\RO[(\text{Event}_1 \land \text{Event}_2) \lor (\text{Event}_1 \land \lnot \text{Event}_2)]\\
        &= \Pr_\RO[\text{Event}_1 \land \text{Event}_2] + \Pr_\RO[\text{Event}_1 \land \lnot \text{Event}_2]\\
        &\leq \Pr_\RO[\text{Event}_1 \land \text{Event}_2] + \Pr_\RO[\lnot \text{Event}_2]\\
        &\leq \Pr_\RO[\text{Event}_1 \land \text{Event}_2] + \negl(\secp)\\
        \Pr_\RO[\text{Event}_1 \land \text{Event}_2] &\geq \Pr_\RO[\text{Event}_1] - \negl(\secp)\\
        &= \text{non-negl}(\secp) - \negl(\secp)\\
        &= \text{non-negl}'(\secp)
    \end{align*}

    Next, if $\text{Event}_1 \land \text{Event}_2$ occur for some $\ro$, then there exists some special $\bfx^\ro$ such that $(\ro, \bfx^\ro)$ satisfy the conditions of $\text{Event}_3$. Furthermore, $\text{Event}_3$ guarantees that:
    \[\Pr_\bfX\left[\bfX = \bfx^\ro | \RO = \ro\right] \geq p(\secp)\]
    
    One way to satisfy the conditions of $\text{Event}_3$ is to sample an $\ro$ that satisfies the conditions of $\text{Event}_1 \land \text{Event}_2$ and then to sample the particular $\bfx^\ro$ value that is guaranteed to satisfy $\text{Event}_3$. The probability of sampling values $(\ro, \bfx^\ro)$ of this form is
    \[\geq \Pr_\RO[\text{Event}_1 \land \text{Event}_2] \cdot p(\secp) \geq \text{non-negl}(\secp) \cdot p(\secp)\]

    Then 
    \[\Pr_{\RO, \bfX}[\text{Event}_3] \geq \text{non-negl}(\secp) \cdot p(\secp)\]
\end{proof}

\begin{claim}
    $p(\secp)^{-1} \cdot 2^{-(s(\secp)-1)}$ is non-negligible.
\end{claim}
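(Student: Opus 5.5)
The plan is to derive the claim inside the ongoing proof by contradiction, by sandwiching $\Pr_{\RO,\bfX}[\text{Event}_3]$ between the two bounds already available. The direct computation suggests that $p^{-1}\cdot 2^{-(s-1)}$ is tiny. So the claim is not meant to hold unconditionally: it is what the contradiction hypothesis (that $\cA$ breaks $(D,\minent)$-certifiable min-entropy) forces, and it is what will be played against that computation at the end.

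\emph{Upper bound.} I will apply \cref{thm:must-heavy-query} with the parameters fixed above, $p = p(\secp) = \delta(\secp)\cdot 2^{-\minent(\secp)}$ and $s = s(\secp) = \lfloor \zeta n\rfloor$. This needs $p \in (0,1]$, which holds since $\delta \le 1$ and $\minent \ge 0$. It also needs $s \in [1,n]$, which holds for sufficiently large $\secp$ since $\zeta = \Omega(1)$ and $n = 2^{\lfloor\log\secp\rfloor}-1 \to \infty$. The three conditions of that lemma are verbatim the three conditions defining $\text{Event}_3$, so
\[\Pr_{\RO,\bfX}[\text{Event}_3] \le 2^{-(s(\secp)-1)}.\]

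\emph{Lower bound.} The preceding lemma gives a non-negligible function $\text{non-negl}(\secp)$ with $\Pr_{\RO,\bfX}[\text{Event}_3] \ge \text{non-negl}(\secp)\cdot p(\secp)$. This lemma is where the contradiction hypothesis (via $\text{Event}_1$) and \cref{thm:2-query-security} (via $\text{Event}_2$) are actually used.

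\emph{Combining.} Chaining the two bounds gives $\text{non-negl}(\secp)\cdot p(\secp) \le 2^{-(s-1)}$. Since $p(\secp) > 0$, dividing by it yields
\[p(\secp)^{-1}\cdot 2^{-(s(\secp)-1)} \ge \text{non-negl}(\secp).\]
A function that is pointwise (for large $\secp$) at least a non-negligible function is itself non-negligible. Concretely, there is a polynomial $r$ with $\text{non-negl}(\secp) \ge 1/r(\secp)$ for infinitely many $\secp$, and the same $r$ then witnesses non-negligibility of the left-hand side. This proves the claim as stated.

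There is no real obstacle here. The only care needed is bookkeeping. The lemma must be invoked with exactly the same $p$ and $s$ that appear in $\text{Event}_3$. Its side conditions $p\in(0,1]$ and $s\in[1,n]$ must be checked for large $\secp$. And the ``non-negligible'' quantifier must be handled correctly: it means ``not negligible,'' i.e.\ at least an inverse polynomial infinitely often, which is preserved under pointwise lower bounds.

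For the next step, I would note that $p^{-1}2^{-(s-1)} = \secp^{d}\,2^{\minent(\secp)}\,2^{-\lfloor\zeta n\rfloor+1}$. Here $\minent = o(\secp^{c/2})$ with $c<1$, and $\lfloor \zeta n\rfloor = \Omega(\secp)$, so this expression is $2^{-\Omega(\secp)}$ and hence negligible. That contradicts the claim just derived and finishes the proof of \cref{thm:CR-for-depth-bounded-adversaries}.
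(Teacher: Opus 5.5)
Your proposal is correct and follows the paper's argument. You sandwich $\Pr_{\RO,\bfX}[\text{Event}_3]$ between $\text{non-negl}(\secp)\cdot p(\secp)$ from the preceding lemma and $2^{-(s(\secp)-1)}$ from \cref{thm:must-heavy-query}, divide by $p(\secp)>0$, and note that a function bounded below by a non-negligible function is itself non-negligible. Your explicit check of the side conditions $p\in(0,1]$ and $s\in[1,n]$ is a small piece of bookkeeping that the paper leaves implicit.
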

\begin{proof}
\Cref{thm:must-heavy-query} says that \[\Pr_{\RO, \bfX}[\text{Event}_3] \leq 2^{-(s(\secp)-1)}\] for all $\secp$.
This implies that there is some non-negligible function $\text{non-negl}(\secp)$ such that for all $\secp$,
\begin{align*}
    \text{non-negl}(\secp) \cdot p(\secp) &\leq \Pr_{\RO, \bfX}[\text{Event}_3] \leq 2^{-(s(\secp)-1)}
\end{align*}
For sufficiently large $\secp$, $p(\secp) > 0$ because $p(\secp) = \omega\left(2^{-\frac{1}{2} \cdot \left(\secp^{c/2}\right)}\right)$. Then for sufficiently large $\secp$,
\begin{align*}
    \text{non-negl}(\secp) &\leq p(\secp)^{-1} \cdot 2^{-(s(\secp)-1)}
\end{align*}
If $p(\secp)^{-1} \cdot 2^{-(s(\secp)-1)}$ were negligible, then $\text{non-negl}(\secp)$ would be negligible as well, which is impossible. Therefore, $p(\secp)^{-1} \cdot 2^{-(s(\secp)-1)}$ is non-negligible.
\end{proof}

On the other hand, we can show that $p(\secp)^{-1} \cdot 2^{-(s(\secp)-1)}$ is negligible.
\begin{claim}
    $p(\secp)^{-1} \cdot 2^{-(s(\secp)-1)}$ is negligible.
\end{claim}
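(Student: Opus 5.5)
We need $p^{-1}2^{-(s-1)}$ negligible, where $p=\delta\cdot 2^{-\minent}$ and $s=\lfloor\zeta n\rfloor$. The plan is to write the quantity as a single power of two and show the exponent is $-\Omega(\secp)$.

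First, rewrite the quantity as
\[p(\secp)^{-1}\cdot 2^{-(s-1)} = \delta(\secp)^{-1}\cdot 2^{\minent(\secp)}\cdot 2^{-(\lfloor \zeta n\rfloor -1)}.\]
Since $\delta$ is inverse-polynomial, as in the preceding claim we may write $\delta(\secp)\ge \secp^{-d}$ for some constant $d>0$ and all sufficiently large $\secp$. Hence $\delta^{-1}\le 2^{d\log\secp}$, and the quantity is at most
\[2^{\,d\log\secp+\minent(\secp)+1-\lfloor\zeta n\rfloor}.\]

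Next, bound the subtracted term. By the parameter choice, $n=2^{\lfloor\log\secp\rfloor}-1\ge \secp/2-1$. Since $\zeta=\Omega(1)$ is a positive constant (\cref{thm:list-recoverability}), we get $\lfloor\zeta n\rfloor\ge \zeta\secp/2-2=\Omega(\secp)$.

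Then bound the remaining terms. We have $\minent(\secp)=o(\secp^{c/2})$ with $c<1$, so $\minent=o(\secp)$; also $d\log\secp=o(\secp)$. Therefore for sufficiently large $\secp$,
\[d\log\secp+\minent(\secp)+1\le \tfrac{\zeta}{8}\secp,\]
and the whole exponent is at most $-\tfrac{\zeta}{8}\secp+2$.

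It follows that the quantity is at most $4\cdot 2^{-\zeta\secp/8}$. This is smaller than $\secp^{-k}$ for every constant $k$ once $\secp$ is large, so it is negligible.

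No step is a real obstacle. The only care needed is to convert the asymptotic statements ($\delta$ inverse-polynomial, $\zeta$ a positive constant, $\minent=o(\secp^{c/2})$) into explicit inequalities that hold for sufficiently large $\secp$, and to handle the floors in $n$ and $s$. This contradicts the previous claim that the quantity is non-negligible, which finishes the proof of \cref{thm:CR-for-depth-bounded-adversaries}.
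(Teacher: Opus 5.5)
Your proof is correct and takes essentially the same route as the paper: you lower-bound $s-1$ linearly in $\secp$ via $n \geq \secp/2 - 1$ and $\zeta$ being bounded below by a constant, and you upper-bound $p^{-1}$ by $2^{o(\secp)}$. The one cosmetic difference is that the paper gets $p^{-1} < 2^{\frac{1}{2}\secp^{c/2}}$ by reusing the earlier fact $p = \omega(2^{-\frac{1}{2}\secp^{c/2}})$, whereas you expand $p = \delta \cdot 2^{-\minent}$ directly; your floor estimate also quietly uses $\zeta \leq 1$, which is harmless.
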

\begin{proof}
    We have that 
    \begin{align*}
        s(\secp) &= \lfloor \zeta n \rfloor\\
        n &= 2^{\lfloor \log \secp \rfloor} - 1
    \end{align*}
    \Cref{sec:prelim-YZ-protocol} defines $\zeta = \Omega(1)$, so there exists a constant $k > 0$ such that for all $\secp$,
    \begin{align*}
        \zeta(\secp) &\geq k
    \end{align*}
    
    Next,
    \begin{align*}
        n &\geq 2^{\log \secp - 1} - 1 = \frac{\secp}{2} - 1\\
        s(\secp) &= \lfloor \zeta n \rfloor\\
        &\geq \zeta n - 1\\
        s(\secp) - 1 &\geq \zeta n - 2
    \end{align*}
    Then for sufficiently large $\secp$,
    \begin{align*}
        n &> \frac{\secp}{3} + \frac{2}{k} > 0\\
        s(\secp) - 1 &\geq \zeta \cdot \left(\frac{\secp}{3} + \frac{2}{k}\right) - 2\\
        &\geq k \cdot \left(\frac{\secp}{3} + \frac{2}{k}\right) - 2\\
        &= \frac{k}{3} \cdot \secp + 2 - 2 = \frac{k}{3} \cdot \secp\\
        -(s(\secp)-1) &\leq -\frac{k}{3} \cdot \secp\\
        2^{-(s(\secp)-1)} &\leq 2^{-\frac{k}{3} \cdot \secp}
    \end{align*}
    Additionally, since $p(\secp) = \omega\left(2^{-\frac{1}{2} \cdot \left(\secp^{c/2}\right)}\right)$, we have that for sufficiently large $\secp$,
    \begin{align*}
        p(\secp) &> 2^{-\frac{1}{2} \cdot \left(\secp^{c/2}\right)} > 0\\
        p(\secp)^{-1} &< 2^{\frac{1}{2} \cdot \left(\secp^{c/2}\right)}\\
        p(\secp)^{-1} \cdot 2^{-(s(\secp)-1)} &\leq 2^{\frac{1}{2} \cdot \left(\secp^{c/2}\right)} \cdot 2^{-\frac{k}{3} \cdot \secp}\\
        &\leq 2^{\frac{1}{2} \cdot \left(\secp^{c/2}\right) - \frac{k}{3} \cdot \secp}
    \end{align*}
    Finally, since $0 < c < 1$, then for sufficiently large $\secp$,
    \begin{align*}
        \frac{1}{2} \cdot \left(\secp^{c/2}\right) &\leq \frac{k}{12} \cdot \secp\\
        \frac{1}{2} \cdot \left(\secp^{c/2}\right) - \frac{k}{3} \cdot \secp &\leq \frac{k}{12} \cdot \secp  - \frac{k}{3} \cdot \secp = -\frac{k}{4} \cdot \secp\\
        2^{\frac{1}{2} \cdot \left(\secp^{c/2}\right) - \frac{k}{3} \cdot \secp} &\leq 2^{-\frac{k}{4} \cdot \secp} = \negl(\secp)\\
        p(\secp)^{-1} \cdot 2^{-(s(\secp)-1)} &\leq \negl(\secp)
    \end{align*}

    Therefore, $p(\secp)^{-1} \cdot 2^{-(s(\secp)-1)}$ is negligible.
\end{proof}

We have shown that $p(\secp)^{-1} \cdot 2^{-(s(\secp)-1)}$ is both non-negligible and negligible, which is a contradiction. Therefore, our initial assumption was false, and in fact, \cref{thm:CR-for-depth-bounded-adversaries} is true.

\section{AI Disclosure}
The authors used AI to improve clarity and polish writing in Sections 1 and 2. All techniques in this work were developed by human trial-and-error, without AI assistance. 

\section{Acknowledgments}
AT was supported by NSF CAREER award CCF-2145474.
Parts of this work were done while the authors were visiting the Simons Institute for
the Theory of Computing, Berkeley.

\bibliographystyle{alpha}
\bibliography{references.bib}

@phdthesis{Col06,
  author    = {Roger Colbeck},
  title     = {Quantum and Relativistic Protocols for Secure Multi-Party Computation},
  school    = {University of Cambridge},
  year      = {2006},
  note      = {Available as arXiv:0911.3814}
}

@article{PAM+10,
  author    = {Stefano Pironio and Antonio Ac{\'i}n and Serge Massar and
               Antoine {Boyer de la Giroday} and Dzmitry N. Matsukevich and
               Peter Maunz and Steven Olmschenk and David Hayes and Le Luo and
               T. Andrew Manning and Christopher Monroe},
  title     = {Random Numbers Certified by {Bell}'s Theorem},
  journal   = {Nature},
  volume    = {464},
  number    = {7291},
  pages     = {1021--1024},
  year      = {2010},
  doi       = {10.1038/nature09008}
}

@inproceedings{VV12,
  author    = {Umesh V. Vazirani and Thomas Vidick},
  title     = {Certifiable Quantum Dice: Or, True Random Number Generation
               Secure Against Quantum Adversaries},
  booktitle = {Proceedings of the 44th Annual {ACM} Symposium on Theory of
               Computing ({STOC} '12)},
  pages     = {61--76},
  publisher = {ACM},
  year      = {2012},
  doi       = {10.1145/2213977.2213984}
}

@article{BCM+21,
  author    = {Zvika Brakerski and Paul Christiano and Urmila Mahadev and
               Umesh V. Vazirani and Thomas Vidick},
  title     = {A Cryptographic Test of Quantumness and Certifiable Randomness
               from a Single Quantum Device},
  journal   = {Journal of the {ACM}},
  volume    = {68},
  number    = {5},
  pages     = {31:1--31:47},
  year      = {2021},
  doi       = {10.1145/3441309}
}

@inproceedings{BDF+11,
  author    = {Dan Boneh and {\"O}zg{\"u}r Dagdelen and Marc Fischlin and
               Anja Lehmann and Christian Schaffner and Mark Zhandry},
  title     = {Random Oracles in a Quantum World},
  booktitle = {Advances in Cryptology -- {ASIACRYPT} 2011},
  series    = {Lecture Notes in Computer Science},
  volume    = {7073},
  pages     = {41--69},
  publisher = {Springer},
  year      = {2011},
  doi       = {10.1007/978-3-642-25385-0_3}
}

@inproceedings{AH23,
  author       = {Scott Aaronson and
                  Shih{-}Han Hung},
  editor       = {Barna Saha and
                  Rocco A. Servedio},
  title        = {Certified Randomness from Quantum Supremacy},
  booktitle    = {Proceedings of the 55th Annual {ACM} Symposium on Theory of Computing,
                  {STOC} 2023, Orlando, FL, USA, June 20-23, 2023},
  pages        = {933--944},
  publisher    = {{ACM}},
  year         = {2023},
  url          = {https://doi.org/10.1145/3564246.3585145},
  doi          = {10.1145/3564246.3585145},
  bibsource    = {dblp computer science bibliography, https://dblp.org}
}

@article{YZ24,
   title={Verifiable Quantum Advantage without Structure},
   volume={71},
   ISSN={1557-735X},
   url={http://dx.doi.org/10.1145/3658665},
   DOI={10.1145/3658665},
   number={3},
   journal={Journal of the ACM},
   publisher={Association for Computing Machinery (ACM)},
   author={Yamakawa, Takashi and Zhandry, Mark},
   year={2024},
   month=jun, pages={1–50} }

@article{AA14,
  author       = {Scott Aaronson and
                  Andris Ambainis},
  title        = {The Need for Structure in Quantum Speedups},
  journal      = {Theory Comput.},
  volume       = {10},
  pages        = {133--166},
  year         = {2014}
}

@misc{O18,
    author = {Ryan O'Donnell},
    title = {Lecture 20: The Adversary Method for Quantum Query Lower Bounds},
    year = {2018},
    url = {https://www.cs.cmu.edu/~odonnell/quantum18/lecture20.pdf}
}

@article{Vaz98,
    author = {Ekert, A. and Jozsa, R. and Penrose, R.  and Vazirani, Umesh},
    title = {On the power of quantum computation},
    journal = {Philosophical Transactions of the Royal Society A: Mathematical, Physical and Engineering Sciences},
    volume = {356},
    number = {1743},
    pages = {1759-1768},
    year = {1998},
    month = {08},
    issn = {1364-503X},
    doi = {10.1098/rsta.1998.0247},
    url = {https://doi.org/10.1098/rsta.1998.0247},
    eprint = {https://royalsocietypublishing.org/rsta/article-pdf/356/1743/1759/1450345/rsta.1998.0247.pdf},
}

@article{BBBV97,
   title={Strengths and Weaknesses of Quantum Computing},
   volume={26},
   ISSN={1095-7111},
   url={http://dx.doi.org/10.1137/S0097539796300933},
   DOI={10.1137/s0097539796300933},
   number={5},
   journal={SIAM Journal on Computing},
   publisher={Society for Industrial & Applied Mathematics (SIAM)},
   author={Bennett, Charles H. and Bernstein, Ethan and Brassard, Gilles and Vazirani, Umesh},
   year={1997},
   month=Oct, pages={1510–1523} }

@inproceedings{BSW22,
  author       = {Nikhil Bansal and
                  Makrand Sinha and
                  Ronald de Wolf},
  title        = {Influence in Completely Bounded Block-Multilinear Forms and Classical
                  Simulation of Quantum Algorithms},
  booktitle    = {{CCC}},
  series       = {LIPIcs},
  volume       = {234},
  pages        = {28:1--28:21},
  publisher    = {Schloss Dagstuhl - Leibniz-Zentrum f{\"{u}}r Informatik},
  year         = {2022}
}

@article{Mon12,
  title={Some applications of hypercontractive inequalities in quantum information theory},
  author={Montanaro, Ashley},
  journal={Journal of Mathematical Physics},
  volume={53},
  number={12},
  year={2012},
  publisher={AIP Publishing}
}

@inproceedings{Bhattacharya25,
  author       = {Sreejata Kishor Bhattacharya},
  title        = {Random Restrictions of Bounded Low Degree Polynomials Are Juntas},
  booktitle    = {{ITCS}},
  series       = {LIPIcs},
  volume       = {325},
  pages        = {17:1--17:21},
  publisher    = {Schloss Dagstuhl - Leibniz-Zentrum f{\"{u}}r Informatik},
  year         = {2025}
}

@article{BBFGT26,
  author       = {Roozbeh Bassirian and
                  Adam Bouland and
                  Bill Fefferman and
                  Sam Gunn and
                  Avishay Tal},
  title        = {On Certified Randomness from Fourier Sampling or Random Circuit Sampling},
  journal      = {Quantum},
  volume       = {10},
  pages        = {2002},
  year         = {2026}
}

@article{MVV22,
  author       = {Urmila Mahadev and
                  Umesh V. Vazirani and
                  Thomas Vidick},
  title        = {Efficient Certifiable Randomness from a Single Quantum Device},
  journal      = {CoRR},
  volume       = {abs/2204.11353},
  year         = {2022}
}

\end{document}